\numberwithin{equation}{section}
\newtheorem{theorem}{Theorem}[section]
\newtheorem{assumption}{Assumption}[section]
\newtheorem{corollary}{Corollary}[section]
\newtheorem{lemma}{Lemma}[section]
\theoremstyle{definition}
\newtheorem{definition}{Definition}[section]
\theoremstyle{remark}
\newtheorem{remark}{Remark}[section]
\newcommand{\leref}{Lemma~\ref}
\newcommand{\deref}{Definition~\ref}
\newcommand{\thref}{Theorem~\ref}
\newcommand{\reref}{Remark~\ref}
\newcommand{\asref}{Assumption~\ref}
\renewcommand{\P}{\mathbb{P}}
\newcommand{\R}{\mathbb{R}}
\newcommand{\E}{\mathbb{E}}
\newcommand{\cF}{\mathcal{F}}
\newcommand{\F}{\mathbb{F}}
\newcommand{\cT}{\mathcal{T}}
\newcommand{\T}{\mathbb{T}}
\newcommand{\cP}{\mathcal{P}}
\newcommand{\cQ}{\mathcal{Q}}
\newcommand{\cL}{\mathcal{L}}
\newcommand{\eps}{\varepsilon}
\newcommand{\f}{\mathfrak{f}}
\newcommand{\EQUIV}{\Longleftrightarrow}
\newcommand{\cH}{\mathcal{H}}
\newcommand{\Phin}{\overline\Phi_{\alpha,\beta,\gamma}^n(\overline H,a,b,\overline\mu,c)}
\newcommand{\PhiN}{\overline\Phi_{\alpha,\beta,\gamma}^N(\overline H,a,b,\overline\mu,c)}
\newcommand{\PhiNN}{\overline\Phi_{\alpha,\beta,\gamma}^{N+1}(\overline H,a,b,\overline\mu,c)}
\newcommand{\tPhin}{\tilde\Phi_{\alpha,\beta,\gamma}^n(\tilde H,a,b,\tilde\mu,c)({\bf t})}
\newcommand{\tPhiN}{\tilde\Phi_{\alpha,\beta,\gamma}^N(\tilde H,a,b,\tilde\mu,c)({\bf t})}
\newcommand{\tPhiNN}{\tilde\Phi_{\alpha,\beta,\gamma}^{N+1}(\tilde H,a,b,\tilde\mu,c)({\bf t})}
\newcommand{\hPhin}{\hat\Phi_{\alpha,\beta,\gamma}^n(\hat H,a,b,\hat\mu,c)({\bf v})}
\newcommand{\hPhiN}{\hat\Phi_{\alpha,\beta,\gamma}^N(\hat H,a,b,\hat\mu,c)({\bf v})}
\newcommand{\hPhiNN}{\hat\Phi_{\alpha,\beta,\gamma}^{N+1}(\hat H,a,b,\hat\mu,c)({\bf v})}
\newcommand{\cR}{\mathcal{R}}
\title[]{No-arbitrage and hedging with liquid American options}
\author[]{Erhan Bayraktar}\thanks{E. Bayraktar is supported in part by the National Science Foundation under grant DMS-1613170 and by the
Susan M. Smith Professorship.}
\address{Department of Mathematics, University of Michigan}
\email{erhan@umich.edu}
\author[]{Zhou Zhou}
\address{Department of Mathematics, University of Michigan}
\email{zhouzhou@umich.edu}
\date{\today}
\keywords{semi-static trading strategies, Liquid American options, Fundamental theorem of asset pricing, sub/super hedging dualities}
\begin{document}
\maketitle

\begin{abstract}
Since most of the traded options on individual stocks is of American type it is of interest to generalize the results obtained in semi-static trading to the case when one is allowed to statically trade American options. However, this problem has proved to be elusive so far because of the asymmetric nature of the positions of holding versus shorting such options.
Here we provide a unified framework and generalize the fundamental theorem of asset pricing (FTAP) and hedging dualities in \cite{ZZ8} to the case where the investor can also short American options. Following \cite{ZZ8}, we assume that the longed American options are divisible. As for the shorted American options, we show that the divisibility plays no role regarding arbitrage property and hedging prices. Then using the method of enlarging probability spaces proposed in \cite{Tan}, we convert the shorted American options to European options, and establish the FTAP and sub- and super-hedging dualities in the enlarged space both with and without model uncertainty.
\end{abstract}

\section{Introduction}

Recently there has been some fundamental work on no-arbitrage and hedging in a financial market where stocks are traded dynamically and liquid options are traded statically (semi-static strategies), see e.g., \cite{Hobson1, Mathias, nutz2, Sch3} and the references therein. Even though, \cite{ZZ4, Hobson3, 2016arXiv160402274H, 2016arXiv160404608B, Tan}, consider the problem of hedging American options in this framework, it is worth noting that in all the above papers the liquid options are restricted to be European-style. But since most options on individual stocks are of American type, it is of practical interest to consider these problems when one can use the American options for hedging purposes.
So far only three papers considered American options as hedging devices: \cite{Campi} studies the completeness of the market where American put options of all the strike prices are available for semi-static trading, \cite{Cox} studies the no arbitrage conditions on the price function of American put options where European and American put options are available, \cite{ZZ8} considers FTAP and hedging duality with liquid American options.

The difficulty of using American options in semi-static trading lies in the asymmetric nature of positions of holding versus shorting this option. Our starting point in this paper is \cite{ZZ8}, where we assume that the liquid American options can only be bought, but not sold, and only the sub-hedging price but not the super-hedging price of the hedged American option is considered. The reason is that, if liquid American options are sold, or if the super-hedging is considered, then the investor needs to use a trading strategy that is adapted to the stopping strategy used by the holders of the American options. From this point of view, the problem becomes more complicated.
In this paper, we resolve these difficulties and generalize the FTAP and hedging results to the case where shorting American options is also allowed, hence creating a unified treatment of the problem. In particular, we assume that there are liquid American options that can be sold, and we also consider the super-hedging price of an American option. 

We assume that the longed American options are divisible, and as demonstrated by \cite[Section 2]{ZZ8}, this is a crucial assumption for obtaining the FTAP and subhedging duality.
We show in this paper that the divisibility plays no role for the shorted American options regarding arbitrage and hedging. Then using the method of enlarging probability spaces proposed in \cite{Tan} (also see \cite{Tannew} for its second version), we convert the shorted American options to European options. We establish the FTAP and sub- and super-hedging dualities for American options for models with and without model uncertainty.

A main contribution of this paper is that it provides a unified framework for the FTAP and hedging dualities when American options are available for both static buying and selling. There is extensive literature on FTAP and hedging duality with liquid European options. On the other hand, due to the flexibility of American options, conceptually and technically it is much more difficult to take liquid American options into consideration. Yet most of options in the market are of American style. As far as we know, before this paper there are only three papers \cite{Campi, Cox, ZZ8} consider American options being liquid options, yet none of them provides such a general framework, not to mention the FTAP and hedging duality results in this framework. 

We assume options are bounded. This is because our proof crucially relies on the compactness of liquidating strategies under the weak star topology (or Baxter-Chacon topology, see e.g., \cite{Edgar}). The boundedness of American options is needed in order to use this weak star topology to prove the FTAP and hedging results without model uncertainty, as well as apply some minimax argument for the proof of hedging dualities with model uncertainty. One novelty of this paper lies in the incorporation of liquid American options, and liquidating strategies for American options is crucial and also practical for the FTAP and hedging results. We think such boundedness assumption is not restrictive. First of all, what we have in mind for liquid options are put options, which are bounded (e.g., \cite{Cox} considers American put options). Second, we are in a finite discrete time set-up, and within finitely many time steps, even the stock prices could be reasonably assumed to be bounded, not to mention the option payoffs.

Another assumption is the continuity of options in the case of model uncertainty. Such assumption can be expected. First, a discretization argument is used for the proof of hedging dualities, and continuity is essential for the discretization. Second, for each (longed) American option we have infinitely many possible payoffs  associated with this American option, due to infinitely many possible liquidating strategies. In this sense, we can think of one American option as infinitely many European options (but with merely one price). For the existing literature on FTAP and hedging with infinitely many European options in a model-free or model uncertainty setup, some continuity assumption of options is often imposed. See e.g., \cite{Tan, Sch3, Mathias}. In fact, our continuity assumption is weaker in the sense that even though we assume American options at each period are continuous in $\omega$, their payoffs may still be discontinuous in $\omega$ because liquidating strategies may not be continuous. We again think such continuity assumption is not restrictive, since in practice most of options are (semi-)continuous.

The rest of the paper is organized as follows. In the next section, we first show that it makes no difference whether the shorted American options are divisible or not for the definitions of no-arbitrage and hedging prices. Then we work on an enlarged space and establish the FTAP and hedging dualities for a given model. In Section 3, we extend the FTAP and hedging dualities to the case of model uncertainty.

\section{No-arbitrage and hedging without model ambiguity}
In this section, we first describe the setup of our financial model without model ambiguity. We show that it makes no difference for arbitrage and hedging prices whether the shorted (not longed) American options are divisible or not. Then we reformulate the problems of arbitrage and hedging in an enlarged probability space, and establish FTAP and hedging dualities. Theorems \ref{p1}-\ref{tt1} are the main results of this section.
\subsection{Original probability space}
Let $(\Omega,\cF,\mathbb{F}=(\cF_t)_{t=0,1,\dotso,T},\P)$ be a filtered probability space, where $\cF$ is assumed to be separable, and $T\in\mathbb{N}$ represents the time horizon in discrete time. Let $S=(S_t)_{t=0,\dotso,T}$ be an adapted process taking values in $\R^d$ which represents the stock prices. Let $f^i:\Omega\mapsto\R$, $i=1,\dotso,L$, be $\cF_T$-measurable, representing the payoffs of European options. Let $g^j=(g_t^j)_{t=0,\dotso,T}$, $j=1,\dotso,M$; $h^k=(h^k_t)_{t=0,\dotso,T}$, $k=1,\dotso,N$, be $\mathbb{F}$-adapted processes, representing the payoff processes of American options. We assume that we can only buy but not sell each $f^i$ and $g^j$ at time $t=0$ with price $\alpha^i$ and $\beta^j$ respectively, and we can only sell but not buy each $h^k$ at time $t=0$ with price $\gamma^k$. Denote $f=(f^1,\dotso,f^L)$, $\alpha=(\alpha^1,\dotso, \alpha^L)$, and $\alpha-\eps=(\alpha^1-\eps,\dotso,\alpha^L-\eps)$ for scalar $\eps\in\R$. Similarly we will use $g,h$ and $\beta, \gamma$ for denoting vectors (of processes and prices). For simplicity, we assume that $f,g,h$ are bounded. Let $\phi=(\phi_t)_{t=0,\dotso,T}$ be $\mathbb{F}$-adapted, representing the payoff of the American options whose sub/superhedging prices we are interested in calculating by semi-statically trading in $S,f,g,h$. For simplicity, we assume that $\phi$ is bounded. We call $g$ and the sub-hedged $\phi$ longed American options, and $h$ and the super-hedged $\phi$ shorted American options. 

\begin{remark}
Here $f,g,h$ may represent the options whose trade is quoted with bid-ask spreads. For example, for an American option $\mathfrak{g}$ with bid price $\underline{\mathfrak{g}}$ and ask price $\overline{\mathfrak{g}}$, we can treat it as two American options, one that can only be bought at price $\overline{\mathfrak{g}}$, and the other that can only be sold at price $\underline{\mathfrak{g}}$.

We assume that the European options are only available to buy. This is in fact without loss of generality, because to short a European option $\mathfrak{f}$ is equivalent to long a European option $-\mathfrak{f}$. On the other hand, unlike European options, the treatments of American options that are bought and sold are very different, and that is the reason we separate American options into $g$ and $h$.
\end{remark}

We will often consider two cases, $n=N,N+1$, where $n$ represents the number of shorted American options. To be more specific, for $n=N$, the FTAP or sub-hedging will be considered, and there are $N$ shorted American options involved including $h^1,\dotso,h^N$. For $n=N+1$, the super-hedging of an American option will be considered, and there are $N+1$ shorted American options including $h^1,\dotso,h^N$ and the super-hedged American option.

If no American options are shorted (i.e.,  $h \equiv 0$ and we consider the sub-hedging $\phi$), then the only information the investor can observe is $\mathbb{F}$, and hence she will use a dynamic trading strategy in the stocks that is $\mathbb{F}$-adapted. Moreover, motivated by \cite[Section 2]{ZZ8}, we assume that the American options $g$ and $\phi$ (for sub-hedging) are divisible. That is, the investor can break each unit American options into pieces, and exercise each piece separately. We use the phrase \emph{liquidating strategy} to describe this type of exercise policy. it is more precisely defined as follows:

\begin{definition}
An $\mathbb{F}$-adapted process $\eta=(\eta_t)_{t=0,\dotso,T}$ is said to be an $\mathbb{F}$-liquidating strategy, if $\eta_t\geq 0$ for $t=0,\dotso,T$, and $\sum_{t=1}^T{\eta_t}=1$.
Denote $\cL$ as the set of all $\mathbb{F}$-liquidating strategies.
\end{definition}

Due to the existence of shorted American options $h$ and $\phi$ (recall that $\phi$ is shorted when we consider the problem of super-hedging), the investor's dynamic trading strategy for stock $S$ and liquidating strategy for longed American option $g^j$ should also be adapted to the stopping/liquidating strategies chosen by the holders of $h$ and $\phi$. Moreover, since options $g$  and sub-hedged $\phi$ are assumed to be divisible, it is natural to assume that options $h$ (and super-hedged $\phi$) are also divisible. This assumption makes practical sense:  the investor may sell $h^k$ to several agents, or sell to the same agent a large shares of $h^k$. However, as we will demonstrate, in terms of no-arbitrage and hedging prices (which we will define in the next section), it is in fact sufficient to let all shares of $h^k$ (and super-hedged $\phi$) be exercised once (i.e., holder of $h^k$ uses stopping times). 

\subsection{Discussion of divisibility for shorted American options for no-arbitrage and hedging}
In this sub-section we show that whether the shorted American options are divisible or not, the definitions of no-arbitrage and hedging prices coincide.

\subsubsection{Definitions of no arbitrage and hedging prices with divisibility}
Let $\cH$ be the set of $\mathbb{F}$-adapted processes taking values in $\R^d$. Let 
$$V:=\left\{(v_0,\dotso,v_T)\in\R_+^{T+1}:\ \sum_{t=0}^T v_t=1\right\},$$
which represents the space of liquidating strategies for each shorted American option. Let
$$\hat\cH^n:=\{\hat H(\cdot):V^n\mapsto\cH:\ \hat H_r(v^1,\dotso,v^n)=\hat H_r(u^1,\dotso,u^n),\text{ if }v_t^k=u_t^k \ \text{ for }\ t=0,\dotso,r; \ k=1,\dotso,n\},$$
and
$$\hat\cL^n:=\{\hat\eta(\cdot):V^n\mapsto\cL:\hat\eta _r(v^1,\dotso,v^n)=\hat\eta_r(u^1,\dotso,u^n),\text{ if }v_t^k=u_t^k \ \text{ for } t=0,\dotso,r;\ k=1,\dotso,n\},$$
where $v^k=(v_0^k,\dotso,v_T^k),u^k=(u_0^k,\dotso,u_T^k)\in V$ for $k=1,\dotso,n$, and $V^n$ is the $n$-fold Cartesian product of $V$.  Denote the set of semi-static trading strategies
$$\hat{\mathcal{A}}^n:=\{(\hat H,a,b,\hat\mu,c):\ (a,b,c)\in\R_+^L\times\R_+^M\times\R_+^N,\ \hat H\in\hat\cH^n,\ \hat\mu\in(\hat\cL^n)^M\}.$$
For $n=N,N+1$, denote the payoff using semi-static trading strategy $(\hat H,a,b,\hat\mu,c)\in\hat{\mathcal{A}}^n$  w.r.t. the prices $\alpha,\beta,\gamma$, with the realization ${\bf v}=(v^1,\dotso,v^n)\in V^n$ for shorted American options, 
\begin{equation}\label{e625}
\hPhin:=\hat H({\bf v})\cdot S+a(f-\alpha)+b\left(\left(\hat\mu({\bf v})\right)(g)-\beta\right)-\sum_{k=1}^N c^k(v^k(h^k)-\gamma^k),
\end{equation}
where for $H\in\cH$,
$$H\cdot S:=\sum_{t=0}^{T-1} H_t(S_{t+1}-S_t),$$
and for $\mu=(\mu^1,\dotso,\mu^M)\in(\cL^n)^M$,
$$\mu(g):=\left(\mu^1\left(g^1\right),\dotso,\mu^M\left(g^M\right)\right)$$
with
$$\mu^j\left(g^j\right):=\sum_{t=0}^T g_t^j\mu_t^j.$$
In the above equations we used denoted the inner product of vectors, say $x$ and $y$, by $xy$. Here at the right-hand-side of \eqref{e625}, the first term represents the payoff from trading stocks, second term the payoff from trading European options, third term the payoff from trading longed American options. and last term the payment for shorted American options.

\begin{definition}[No arbitrage]\label{d1}
We say NA$^1$ holds w.r.t. the prices $\alpha,\beta,\gamma$, if for any $(\hat H,a,b,\hat\mu,c)\in\hat{\mathcal{A}}^n$
$$\hPhiN\geq 0,\quad\P\text{-a.s. for any }{\bf v}\in V^N,$$
implies
$$\hPhiN= 0,\quad\P\text{-a.s. for any }{\bf v}\in V^N.$$
We say SNA$^1$ (SNA stands for ``strict no arbitrage'') holds, if there exists $\eps>0$ such that NA$^1$ holds w.r.t. the prices $\alpha-\eps,\beta-\eps,\gamma+\eps$.
\end{definition}

\begin{definition}[Hedging prices]
We define the sub-hedging price of $\phi$ by
\begin{eqnarray}
\notag\underline\pi^1(\phi)&:=&\sup\Big\{x\in\R:\ \exists(\hat H,a,b,\hat\mu,c)\in\hat{\mathcal{A}}^N\text{ and }\hat\eta\in\hat{\cL}^N,\\
\notag&&\text{s.t.}\ \hPhiN+(\hat\eta({\bf v}))(\phi)\geq x\ \P\text{-a.s., }\forall\,{\bf v}\in V^N\Big\},
\end{eqnarray}
and its super-hedging price by
\begin{eqnarray}
\notag\overline\pi^1(\phi)&:=&\inf\Big\{x\in\R:\ \exists(\hat H,a,b,\hat\mu,c)\in\hat{\mathcal{A}}^{N+1},\\
\notag&&\text{s.t.}\ x+\hPhiNN\geq v^{N+1}(\phi)\ \P\text{-a.s., }\forall\,{\bf v}=(v^1,\dotso,v^{N+1})\in V^{N+1}\Big\},
\end{eqnarray}
For a European option $\psi:\Omega\mapsto\R$, define its sub-hedging price as
$$\underline\pi_e^1(\psi):=\sup\left\{x\in\R:\ \exists(\hat H,a,b,\hat\mu,c)\in\hat{\mathcal{A}}^N,\ \text{s.t.}\ \hPhiN+\psi\geq x\ \P\text{-a.s., }\forall\,{\bf v}\in V^N\right\}.$$
\end{definition}
Let us clarify that $\exists$ should be applied before $\forall$ in the above definition.

\subsubsection{Definitions of no arbitrage and hedging prices without divisibility} For $n=N,N+1$, let
$$\tilde\cH^n:=\{\tilde H(\cdot):\mathbb{T}^n\mapsto\cH:\ \tilde H_r(t^1,\dotso,t^n)=\tilde H_r(s^1,\dotso,s^n)\text{ for }r<r^*\},$$
and
$$\tilde\cL^n:=\{\tilde\eta(\cdot):\mathbb{T}^n\mapsto\cL:\tilde\eta _r(t^1,\dotso,t^n)=\tilde\eta_r(s^1,\dotso,s^n)\text{ for }r<r^*\},$$
where $\T:=\{0,\dotso,T\}$, and
$$r^*=\inf_{k\in I}\left(s^k\wedge t^k\right)\quad\text{with}\quad I=\{i\in\{1,\dotso,n\}:\ s^i\neq t^i\}.$$
Denote the set of semi-static trading strategies by
$$\tilde{\mathcal{A}}^n:=\{(\tilde H,a,b,\tilde\mu,c):\ (a,b,c)\in\R_+^L\times\R_+^M\times\R_+^N,\ \tilde H\in\tilde\cH^n,\ \tilde\mu\in(\tilde\cL^n)^M\}.$$
and the payoff using semi-static trading strategy $(\tilde H,a,b,\tilde\mu,c)\in\tilde{\mathcal{A}}^n$  w.r.t. the prices $\alpha,\beta,\gamma$, with the realization ${\bf t}=(t^1,\dotso,t^n)\in\mathbb{T}^n$ for shorted American options by
$$\tPhin:=\tilde H({\bf t})\cdot S+a(f-\alpha)+b\left(\left(\tilde\mu({\bf t})\right)(g)-\beta\right)-\sum_{k=1}^N c^k(h_{t^k}^k-\gamma^k).$$

\begin{definition}[No-arbitrage]\label{d1}
We say NA$^2$ holds w.r.t. the prices $\alpha,\beta,\gamma$, if for any $(\tilde H,a,b,\tilde\mu,c)\in\tilde{\mathcal{A}}^n$
$$\tPhiN\geq 0,\quad\P\text{-a.s. for any }{\bf t}\in\mathbb{T}^N,$$
implies
$$\tPhiN= 0,\quad\P\text{-a.s. for any }{\bf t}\in\mathbb{T}^N.$$
We say SNA$^2$ holds, if there exists $\eps>0$ such that NA$^2$ holds w.r.t. the prices $\alpha-\eps,\beta-\eps,\gamma+\eps$.
\end{definition}

\begin{definition}[Hedging prices]
We define the sub-hedging price of $\phi$ as
\begin{eqnarray}
\notag\underline\pi^2(\phi)&:=&\sup\Big\{x\in\R:\ \exists(\tilde H,a,b,\tilde\mu,c)\in\tilde{\mathcal{A}}^N\text{ and }\tilde\eta\in\tilde{\cL}^N,\\
\notag&&\text{s.t.}\ \tPhiN+(\tilde\eta({\bf t}))(\phi)\geq x\ \P\text{-a.s., }\forall\,{\bf t}\in\T^N\Big\},
\end{eqnarray}
and its super-hedging price as
\begin{eqnarray}
\notag\overline\pi^2(\phi)&:=&\inf\Big\{x\in\R:\ \exists(\tilde H,a,b,\tilde\mu,c)\in\tilde{\mathcal{A}}^{N+1},\\
\notag&&\text{s.t.}\ x+\tPhiNN\geq\phi_{t^{N+1}}\ \P\text{-a.s., }\forall\,{\bf t}=(t^1,\dotso,t^{N+1})\in\T^{N+1}\Big\}.
\end{eqnarray}
For a European option $\psi:\Omega\mapsto\R$, define its sub-hedging price as
$$\underline\pi_e^2(\psi):=\sup\left\{x\in\R:\ \exists(\tilde H,a,b,\tilde\mu,c)\in\tilde{\mathcal{A}}^N,\ \text{s.t.}\ \tPhiN+\psi\geq x\ \P\text{-a.s., }\forall\,{\bf t}\in\T^N\right\}.$$
\end{definition}

\subsubsection{The Equivalence of the no-arbitrage definitions and the hedging prices}

\begin{theorem}\label{p1}
We have
$\overline\pi^1(\phi)=\overline\pi^2(\phi)$ and $\underline\pi^1(\phi)=\underline\pi^2(\phi)$ and $\underline\pi_e^1(\psi)=\underline\pi_e^2(\psi).$
\end{theorem}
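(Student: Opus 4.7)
The plan is to establish each of the three equalities by proving two inequalities: the easy direction, obtained by restricting a divisible semi-static strategy to Dirac liquidating strategies, and the hard direction, obtained by averaging a non-divisible strategy against the product probability measure associated to a divisible liquidating strategy.

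For the easy direction I would, for each $t \in \T$, write $\delta_t \in V$ for the Dirac mass with $(\delta_t)_s = \mathbf{1}_{\{s=t\}}$. Given any $(\hat H,a,b,\hat\mu,c) \in \hat{\mathcal{A}}^n$ (and $\hat\eta \in \hat\cL^n$ when a longed $\phi$ is involved), I would define non-divisible counterparts by $\tilde H(t) := \hat H(\delta_{t^1},\dotsc,\delta_{t^n})$, and analogously $\tilde\mu(t), \tilde\eta(t)$. The $\hat\cH^n$-adaptedness of $\hat H$ transfers to the $\tilde\cH^n$-adaptedness of $\tilde H$, because for $r < r^*$ one has $r < t^k \wedge s^k$ whenever $t^k \neq s^k$, forcing $(\delta_{t^k})_s = (\delta_{s^k})_s = 0$ for all $s \leq r$ and all $k$. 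Since $\delta_{t^k}(h^k) = h^k_{t^k}$ and $\delta_{t^{N+1}}(\phi) = \phi_{t^{N+1}}$, the divisible payoff evaluated at Dirac liquidating strategies coincides with the corresponding non-divisible payoff, so any $x$ admitted as a divisible hedge is also admitted as a non-divisible one. This gives $\underline\pi^1 \leq \underline\pi^2$, $\overline\pi^1 \geq \overline\pi^2$, and $\underline\pi_e^1 \leq \underline\pi_e^2$.

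For the hard direction, given $(\tilde H,a,b,\tilde\mu,c) \in \tilde{\mathcal{A}}^n$ and $v = (v^1,\dotsc,v^n) \in V^n$, I would introduce the product probability measure $\pi_v := \otimes_{k=1}^n \pi_{v^k}$ on $\T^n$, where $\pi_{v^k}(\{t\}) := v^k_t$, and then set
$$\hat H(v) := \sum_{t \in \T^n} \pi_v(\{t\})\,\tilde H(t),\quad \hat\mu(v) := \sum_{t \in \T^n} \pi_v(\{t\})\,\tilde\mu(t),\quad \hat\eta(v) := \sum_{t \in \T^n} \pi_v(\{t\})\,\tilde\eta(t).$$
These are convex combinations over a finite set, so convexity of $\cL$ gives $\hat\mu(v) \in \cL^M$ and $\hat\eta(v) \in \cL$. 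Using linearity of $H \cdot S$ in $H$, of $\mu(g)$ in $\mu$, and of $v^k(h^k)$ in $v^k$, a short calculation gives the payoff identity $\sum_{t} \pi_v(\{t\})\,\tPhin = \hPhin$ pointwise in $\omega$, together with $\sum_{t} \pi_v(\{t\})\,\phi_{t^{N+1}} = v^{N+1}(\phi)$ and $\sum_{t} \pi_v(\{t\})\,\tilde\eta(t)(\phi) = \hat\eta(v)(\phi)$. Because $\T^n$ is finite, the $\P$-a.s.\ hedging inequality for each $t$ holds simultaneously on a common $\P$-full event; averaging against $\pi_v$ on that event preserves the inequality and yields a divisible hedge with the same $x$, giving the reverse inequalities $\underline\pi^1 \geq \underline\pi^2$, $\overline\pi^1 \leq \overline\pi^2$, and $\underline\pi_e^1 \geq \underline\pi_e^2$.

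The main obstacle will be verifying that the averaged objects $\hat H, \hat\mu, \hat\eta$ actually lie in the required classes $\hat\cH^n$, $(\hat\cL^n)^M$, $\hat\cL^n$. The $\tilde\cH^n$-adaptedness of $\tilde H$ says that $\tilde H_r(t)$ depends on $t$ only through the exercise trajectory up to time $r$, namely, for each $k$, the event $\{t^k \leq r\}$ together with the exact value of $t^k$ on that event. Under $\pi_v$, the law of this trajectory depends on $v$ only through the coordinates $(v^k_0,\dotsc,v^k_r)_{k=1}^n$, since $\pi_{v^k}(\{t^k = j\}) = v^k_j$ for $j \leq r$ and $\pi_{v^k}(\{t^k > r\}) = 1 - \sum_{j \leq r} v^k_j$. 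Therefore $\hat H_r(v)$ depends on $v$ only through those same coordinates, which is exactly the $\hat\cH^n$ requirement; the analogous argument works for $\hat\mu_r$ and $\hat\eta_r$.
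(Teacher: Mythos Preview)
Your proof is correct and follows essentially the same approach as the paper: the key construction in the hard direction---defining $\hat H(v)$ as the average of $\tilde H(t)$ against the product measure $\pi_v = \otimes_k v^k$ on $\T^n$---is exactly the paper's construction $\hat H_r(u,v) = \sum_{s,t} u_s v_t \tilde H_r(s,t)$ written in the general case. The only difference is cosmetic: the paper verifies $\hat H \in \hat\cH^n$ by an explicit four-block computation (splitting the sum at $r = t^*$), whereas you argue more conceptually that $\tilde H_r(t)$ depends on $t$ only through the exercise record up to time $r$, whose $\pi_v$-law depends only on $(v^k_0,\dotsc,v^k_r)_k$; both arguments are valid and yield the same conclusion.
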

\begin{proof}
For the simplicity of presentation, we will only show $\underline\pi_e^1(\psi)=\underline\pi_e^2(\psi)$ for $L=M=0$ and $N=2$. The proof can be very easily adapted for the more general case.

Since
 $\underline\pi_e^1(\psi)\leq\underline\pi_e^2(\psi)$ is clear, we focus on the reverse inequality. Let $x<\underline\pi_e^2(\psi)$, then there exists $(\tilde H,c^1,c^2)\in\tilde\cH\times\R_+\times\R_+$ such that for any $(t^1,t^2)\in\T^2$,
$$\tilde H(t^1,t^2)\cdot S-c^1(h_{t^1}^1-\gamma^1)-c^2(h_{t^2}^2-\gamma^2)+\psi\geq x,\quad\P\text{-a.s.}.$$

Define $\hat H:\ V^2\mapsto\cH$,
\begin{equation}\notag
\hat H_r({u,v})=\sum_{s=0}^T\sum_{t=0}^T u_s v_t \tilde H_r(s,t),\quad{u}=(u_0,\dotso,u_T),{v}=(v_0,\dotso,v_T)\in V.
\end{equation}
For any ${u,v,u',v'}\in V$, if for $r=0,\dotso,t^*$, $u_r=u_r'$ and $v_r=v_r'$, then
\begin{eqnarray}
\notag&&\hspace{-0.7cm}\hat H_{t^*}({u,v})=\sum_{s=0}^T\sum_{t=0}^T u_s v_t\tilde H_{t^*}(s,t)\\
\notag&&\hspace{-0.7cm}=\sum_{s=0}^{t^*}\sum_{t=0}^{t^*} u_s v_t\tilde H_{t^*}(s,t)+\sum_{s=t^*+1}^T\sum_{t=0}^{t^*}u_s v_t\tilde H_{t^*}(s,t)+\sum_{s=0}^{t^*}\sum_{t=t^*+1}^T u_s v_t\tilde H_{t^*}(s,t)+\sum_{s=t^*+1}^T\sum_{t=t^*+1}^T u_s v_t\tilde H_{t^*}(s,t)\\
\notag&&\hspace{-0.7cm}=\sum_{s=0}^{t^*}\sum_{t=0}^{t^*} u_s v_t\tilde H_{t^*}(s,t)+\sum_{s=t^*+1}^T u_s\sum_{t=0}^{t^*} v_t\tilde H_{t^*}(T,t)+\sum_{t=t^*+1}^T v_t\sum_{s=0}^{t^*} u_s\tilde H_{t^*}(s,T)+\sum_{s=t^*+1}^T  u_s\sum_{t=t^*+1}^T v_t\tilde H_{t^*}(T,T)\\
\notag&&\hspace{-0.7cm}=\sum_{s=0}^{t^*}\sum_{t=0}^{t^*} u_s' v_t'\tilde H_{t^*}(s,t)+\sum_{s=t^*+1}^T u_s'\sum_{t=0}^{t^*} v_t'\tilde H_{t^*}(T,t)+\sum_{t=t^*+1}^T v_t'\sum_{s=0}^{t^*} u_s'\tilde H_{t^*}(s,T)+\sum_{s=t^*+1}^T  u_s'\sum_{t=t^*+1}^T v_t'\tilde H_{t^*}(T,T)\\
\notag&&\hspace{-0.7cm}=\sum_{s=0}^{t^*}\sum_{t=0}^{t^*} u_s' v_t'\tilde H_{t^*}(s,t)+\sum_{s=t^*+1}^T\sum_{t=0}^{t^*}u_s' v_t'\tilde H_{t^*}(s,t)+\sum_{s=0}^{t^*}\sum_{t=t^*+1}^T u_s' v_t'\tilde H_{t^*}(s,t)+\sum_{s=t^*+1}^T\sum_{t=t^*+1}^T u_s' v_t'\tilde H_{t^*}(s,t)\\
\notag&&\hspace{-0.7cm}=\sum_{s=0}^T\sum_{t=0}^T u_s' v_t'\tilde H_{t^*}(s,t)=\hat H_{t^*}({ u',v'}),
\end{eqnarray}
where for the third and fifth equalities we use the non-anticipativity of $\tilde H$ (see the definition of $\tilde\cH$). This implies $\hat H\in\hat\cH^2$. Now for any ${u,v}\in V$,
\begin{eqnarray}
\notag&&\hat H(u,v)\cdot S-c^1({u}(h^1)-\gamma^1)-c^2({v}(h^2)-\gamma^2)+\psi\\
\notag&&=\sum_{r=0}^{T-1}\sum_{s=0}^T\sum_{t=0}^T u_s v_t \tilde H_r(s,t)(S_{r+1}-S_r)-c^1\sum_{s=0}^T u_s(h_s^1-\gamma^1)-c^2\sum_{t=0}^T v_t(h_t^2-\gamma^2)+\psi\\
\notag&&=\sum_{s=0}^T\sum_{t=0}^T u_s v_t\left[\tilde H(s,t)\cdot S-c^1(h_s^1-\gamma^1)-c^2(h_t^2-\gamma^2)+\psi\right]\\
\notag&&\geq x,\quad\P\text{-a.s.}.
\end{eqnarray}
This implies that $x\leq\underline\pi_e^1(\psi)$. By the arbitrariness of $x$, we have $\underline\pi_e^2(\psi)\leq \underline\pi_e^1(\psi)$.
\end{proof}

\begin{theorem}\label{p2}
SNA$^1$ and SNA$^2$ are equivalent.
\end{theorem}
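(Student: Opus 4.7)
The plan is to prove the two implications separately; throughout I abbreviate the $\hat$- and $\tilde$-payoffs evaluated at the perturbed prices $(\alpha-\eps,\beta-\eps,\gamma+\eps)$ by $\hat\Phi^{\eps}({\bf v})$ and $\tilde\Phi^{\eps}({\bf t})$, respectively. For SNA$^1\Rightarrow$SNA$^2$, I would recycle the convex-combination embedding from the proof of \thref{p1}. Given the witness $\eps>0$ from SNA$^1$ and any $(\tilde H,a,b,\tilde\mu,c)\in\tilde{\mathcal{A}}^N$ with $\tilde\Phi^{\eps}({\bf t})\geq 0$ for all ${\bf t}\in\T^N$, I would set
\[
\hat H({\bf v}):=\sum_{{\bf s}\in\T^N}\prod_{j=1}^N v^j_{s^j}\,\tilde H({\bf s}),\qquad \hat\mu({\bf v}):=\sum_{{\bf s}\in\T^N}\prod_{j=1}^N v^j_{s^j}\,\tilde\mu({\bf s}).
\]
The proof of \thref{p1} gives $\hat H\in\hat\cH^N$ and $\hat\mu\in(\hat\cL^N)^M$ verbatim, and substituting back (using $\sum_{\bf s}\prod_j v^j_{s^j}=1$ and $v^k(h^k)=\sum_{\bf s}\prod_j v^j_{s^j}h^k_{s^k}$) yields $\hat\Phi^{\eps}({\bf v})=\sum_{\bf s}\prod_j v^j_{s^j}\,\tilde\Phi^{\eps}({\bf s})\geq 0$ for every ${\bf v}\in V^N$. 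Applying NA$^1$ at $\eps$ then forces $\hat\Phi^{\eps}\equiv 0$, and taking ${\bf v}=(e_{s^1},\dots,e_{s^N})$ recovers $\tilde\Phi^{\eps}({\bf s})=0$; hence SNA$^2$ holds at the same $\eps$.

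For SNA$^2\Rightarrow$SNA$^1$ the main obstacle is that the naive vertex restriction $\tilde H({\bf t}):=\hat H(e_{\bf t})$ discards the nonlinear-in-${\bf v}$ content of $\hat H$, so an $\hat{\mathcal{A}}^N$-arbitrage may fail to produce a $\tilde{\mathcal{A}}^N$-arbitrage when viewed on the vertices alone. To get around this I would first establish a strict-slack lemma: if NA$^2$ holds at $\eps_0$ and $(\tilde H,a,b,\tilde\mu,c)\in\tilde{\mathcal{A}}^N$ satisfies $\tilde\Phi^{\eps}({\bf t})\geq 0$ for all ${\bf t}\in\T^N$ with some $\eps<\eps_0$, then in fact $\tilde\Phi^{\eps}\equiv 0$ and $a=b=c=0$. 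This is immediate from the identity
\[
\tilde\Phi^{\eps_0}({\bf t})=\tilde\Phi^{\eps}({\bf t})+(\eps_0-\eps)\Big(\textstyle\sum_{i=1}^L a^i+\sum_{j=1}^M b^j+\sum_{k=1}^N c^k\Big),
\]
since the right side is the sum of two nonnegative terms whose sum $\tilde\Phi^{\eps_0}$ is forced to vanish by NA$^2$ at $\eps_0$; the strict inequality $\eps_0>\eps$ then forces the coefficient sum to be zero, i.e., $a=b=c=0$.

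To finish, assume SNA$^2$ at $\eps_0$, fix $\eps\in(0,\eps_0)$, and suppose for contradiction that some $(\hat H,a,b,\hat\mu,c)\in\hat{\mathcal{A}}^N$ has $\hat\Phi^{\eps}({\bf v})\geq 0$ for all ${\bf v}\in V^N$ but $\hat\Phi^{\eps}({\bf v}^*)\neq 0$ for some ${\bf v}^*$. Setting $\tilde H^1({\bf t}):=\hat H(e_{\bf t})$ and $\tilde\mu^1({\bf t}):=\hat\mu(e_{\bf t})$ puts both into $\tilde\cH^N$ and $(\tilde\cL^N)^M$, since for $r<r^*$ each pair $e_{t^k},e_{s^k}$ agrees on coordinates $0,\dots,r$ (identically when $t^k=s^k$, and both vanishing on that range when $t^k\neq s^k$ and both exceed $r$), so $\hat H$'s non-anticipativity transfers. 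Since $\tilde\Phi^{\eps}_1({\bf t})=\hat\Phi^{\eps}(e_{\bf t})\geq 0$, the strict-slack lemma applied to $(\tilde H^1,a,b,\tilde\mu^1,c)$ forces $a=b=c=0$; consequently $\hat\Phi^{\eps}({\bf v})=\hat H({\bf v})\cdot S$, so $\hat H({\bf v}^*)\cdot S\geq 0$ and $\neq 0$. The constant strategy $\tilde H^2({\bf t}):=\hat H({\bf v}^*)$ (independent of ${\bf t}$, hence trivially non-anticipative) with zero option positions now lies in $\tilde{\mathcal{A}}^N$ and produces payoff $\hat H({\bf v}^*)\cdot S$ at every ${\bf t}\in\T^N$, which is nonnegative and not identically zero, directly contradicting NA$^2$ at $\eps_0$. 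The crux of this direction is the two-step reduction: strict slack must first kill the option positions, and only then can the nonlinear-in-${\bf v}$ stock strategy at ${\bf v}^*$ be imported into a constant-in-${\bf t}$ pure-stock arbitrage in $\tilde{\mathcal{A}}^N$.
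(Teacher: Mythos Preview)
Your proof is correct, and both directions take a genuinely different route from the paper. For SNA$^2\Rightarrow$SNA$^1$, the paper first invokes \thref{p1} to identify the type-1 and type-2 hedging prices, deduces from SNA$^2$ the strict inequalities $\alpha^i-\eps\ge\underline\pi_e'(f^i)$, $\beta^j-\eps\ge\underline\pi'(g^j)$, $\gamma^k+\eps\le\overline\pi'(h^k)$, and then argues by contradiction via a sequence of type-1 arbitrages at levels $1/m$: if the option positions are nonzero one normalizes by the largest coefficient, passes to a subsequence, and obtains in the limit a violation of one of those strict price inequalities; if the option positions vanish one extracts a pure-stock arbitrage. Your argument replaces this limit-and-normalization step with the strict-slack identity $\tilde\Phi^{\eps_0}=\tilde\Phi^{\eps}+(\eps_0-\eps)(\sum a^i+\sum b^j+\sum c^k)$, which kills all option positions in one shot without any appeal to hedging prices or subsequences; the residual pure-stock arbitrage at ${\bf v}^*$ is then imported verbatim as a constant-in-${\bf t}$ strategy. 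For SNA$^1\Rightarrow$SNA$^2$, you establish directly that NA$^1$ at level $\eps$ implies NA$^2$ at the \emph{same} level $\eps$ via the convex-combination embedding, whereas the paper only indicates that a ``similar argument'' (presumably again through hedging prices and a limit) works. Your approach is more self-contained and elementary; the paper's approach, on the other hand, makes explicit the link between strict no-arbitrage and strict slack in the hedging prices, which is of independent interest.
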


\begin{proof}
Let SNA$^2$ hold. Denote the sub-hedging price of each $f^i$ (resp. sub-hedging price of each $g^j$, super-hedging price of each $h^k$) using stock and other liquid options by $\underline\pi_e'(f^i)$ (resp. $\underline\pi'(g^j)$, $\overline\pi'(h^k)$). We will not differentiate the hedging prices for type 1 and type 2 since by \thref{p1} they are the same. By SNA$^2$, there exists $\eps>0$, such that NA$^2$ holds w.r.t. the prices $\alpha-\eps,\beta-\eps,\gamma+\eps$. This implies
\begin{equation}\label{e12}
\alpha^i-\eps\geq\underline\pi_e'(f^i),\quad\beta^j-\eps\geq\underline\pi'(g^j),\quad\gamma^k+\eps\leq\overline\pi'(h^k).
\end{equation}

Suppose SNA$^1$ fails, then for any $m\in\mathbb{N}$ there would exist $(\hat H^m,a^m,b^m,\hat\mu^m,c^m)\in\hat{\mathcal{A}}^N$ such that 
\begin{equation}\label{e10}
\hat\Phi_{\alpha-\frac{1}{m},\beta-\frac{1}{m},\gamma+\frac{1}{m}}^N(\hat H^m,a^m,b^m,\hat\mu^m,c^m)({\bf v})\geq 0,\quad\P\text{-a.s.}\ \forall\,{\bf v}\in V^N,
\end{equation}
and ${\bf v}^m\in V^N$ such that
$$\P\left\{\hat\Phi_{\alpha-\frac{1}{m},\beta-\frac{1}{m},\gamma+\frac{1}{m}}^N(\hat H^m,a^m,b^m,\hat\mu^m,c^m)({\bf v}^m)>0\right\}>0.$$
If $a^m=0,b^m=0,c^m=0$, then we would have that
\begin{equation}\label{e9}
\hat H({\bf v}^m)\cdot S\geq 0,\ \P\text{-a.s.},\quad\text{and}\quad\P\{\hat H({\bf v}^m)\cdot S> 0\}>0.
\end{equation}
SNA$^2$ implies that for any $H\in\cH$, if $H\cdot S\geq 0\ \P$-a.s. then $H\cdot S=0\ \P$-a.s.. This contradicts \eqref{e9}. Therefore, at least one of $a_i^m,b_j^m,c_k^m$ is not zero. Denote
$$d^m:=\max\{a_i^m,b_j^m,c_k^m,\ i=1,\dotso,L,\ j=1,\dotso,M,\ k=1,\dotso,N\}>0.$$
By \eqref{e10},
\begin{equation}\label{e11}
\frac{1}{d^m}\hat\Phi_{\alpha,\beta,\gamma}^N(\hat H^m,a^m,b^m,\hat\mu^m,c^m)({\bf v})+\frac{L+M+N}{m}\geq 0,\quad\P\text{-a.s.}\ \forall\,{\bf v}\in V^N.
\end{equation}
For each $m$ at least one of $a_i^m/d^m,b_j^m/d^m,c_k^m/d^m$ is equal to $1$. Without loss of generality, (up to a sub-sequence) assume $a_1^m/d^m=1$. Then by \eqref{e11},
$$\underline\pi_e'(f^1)\geq\alpha^1-\frac{L+M+N}{m}.$$
Since $m$ is arbitrary we have that $\underline\pi_e'(f^1) \geq \alpha^{1}$,
which contradicts \eqref{e12}. This shows that SNA$^2$ implies SNA$^1$.

We can show that SNA$^1$ implies SNA$^2$ using a similar argument.
\end{proof}

\begin{remark}
In terms of arbitrage and hedging prices, divisibility are essential for $g$ and the sub-hedged $\phi$ as indicated by \cite[Section 2]{ZZ8}, but not essential for $h$ and the super-hedged $\phi$ as indicated by Theorems \ref{p1} and \ref{p2}. Therefore in the rest the paper we will assume that the shorted American options $h$ and super-hedged $\phi$ are not divisible.
\end{remark}

\subsection{Enlarged probability space}

We will follow the method in \cite{Tan} to reformulate the problems of arbitrage and hedging in an enlarged space. The advantage for working on the enlarged space is that the shorted American options become European options.

We will again let $n=N$ or $n=N+1$. The case when $n=N$ is for FTAP and sub-hedging $\phi$, i.e., either $\phi$ is not involved or the investor longs $\phi$. The case for $n=N+1$ is for super hedging $\phi$. Let $\overline\Omega^n:=\Omega\times\mathbb{T}^n$. Here $\mathbb{T}^N$ (resp. $\mathbb{T}^{N+1}$) represents the space of the exercise times for shorted American options $h$ (resp. $h$ and $\phi$ for super-hedging). For $k=1,\dotso,n$, let $\theta^k: \overline\Omega^n\mapsto\mathbb{T}$,
\begin{equation}\label{e901}
\theta^k(\overline\omega^n)=t^k,\quad \overline\omega^n=(\omega,t^1,\dotso,t^n)\in\overline\Omega^n.
\end{equation}
We extend $S$ from $\Omega$ to $\overline\Omega^n$, i.e., $\overline S_t^n(\overline\omega^n):=S_t(\omega)$ for $\overline\omega^n=(\omega,t^1,\dotso,t^n)\in\overline\Omega^n$. We similarly extend $f^i,g^j$ and $\phi$ (for sub-hedging, i.e., when $n=N$) and we denote the extensions as $\overline{f^i}^n,\overline{g^j}^n,\overline\phi^N$, respectively. For  $k=1,\dotso,N$, we extend $h^k$ from $\Omega$ to $\overline\Omega^n$,
$$\overline{h^k}^n(\overline\omega^n):=h_{t^k}^k(\omega),\quad\overline\omega^n=(\omega,t^1,\dotso,t^n)\in\overline\Omega^n.$$
Similarly, (for super-hedging) we extend $\phi$ from $\Omega$ to $\overline\Omega^{N+1}$,
$$\overline\phi^{N+1}\left(\overline\omega^{N+1}\right):=\phi_{t^{N+1}}(\omega),\quad\overline\omega^{N+1}=(\omega,t^1,\dotso,t^{N+1})\in\overline\Omega^{N+1}.$$

\begin{remark}\label{r1}
The extensions $\overline\phi^N$ and $\overline\phi^{N+1}$ serve different roles: $\overline\phi^N$ is an American option, which will be considered in the sub-hedging problem, while $\overline\phi^{N+1}$ is a European option, which will be considered in the super-hedging problem.
\end{remark}

Next, let us define the enlarged filtration. For $t=0,\dotso,T$,
$$\overline{\mathcal{F}}_t^n:=\sigma(\mathcal{F}_t\times\mathbb{T}^n,\{\theta^k\leq s\},s=0,\dotso,t,\ k=1,\dotso,n),$$ 
where $\mathcal{F}_t\times\mathbb{T}^n:=\{A\times\mathbb{T}^n:\ A\in\mathcal{F}_t\}$. Denote $\overline{\mathbb{F}}^n:=\left(\overline{\mathcal{F}}_t^n\right)_{t=0,\dotso,T}$.

Finally, for $n=N,N+1$, let $P^n$ be any probability measure on $(\mathbb{T}^n,\mathcal{B}(\mathbb{T}^n))$\footnote{We use $\mathcal{B}$ to identify the Borel sigma-algebra.} with full support. That is, for any $(t^1,\dotso,t^n)\in\T^n$
$$P^n(\{t^1,\dotso,t^n\})>0.$$
Let $\overline\P^n:=\P\otimes P^n$. $\overline{\mathbb{M}}^n:=(\overline\Omega^n,\overline{\mathcal{F}}_T^n,\overline{\mathbb{F}}^n,\overline\P^n)$ will serve as the enlarged filtered probability space.

Let $\overline{\mathcal{H}}^N$ (resp. $\overline{\mathcal{H}}^{N+1}$) be the set of $\overline{\mathbb{F}}^N$-adapted (resp. $\overline{\mathbb{F}}^{N+1}$-adapted) processes, representing the set of dynamic trading strategies for the stock based on information $\mathbb{F}$ as well as the exercise times of $h$ (resp. $h$ and $\phi$ for super-hedging). Similarly, let $\overline\cL^n$ be the set of $\overline{\mathbb{F}}^n$-liquidating strategies. Below we give the definition of semi-static trading strategies in the enlarged filtered probability space.

\begin{definition}
For $n=N,N+1$, a quintuplet $(\overline H,a,b,\overline\mu,c)$ is said to be an $\overline{\mathbb{M}}^n$-semi-static trading strategy, if $(a,b,c)\in\R_+^L\times\R_+^M\times\R_+^N$, $\overline H\in\overline{\mathcal{H}}^n$, and $\overline\mu\in(\overline\cL^n)^M$. Denote $\overline{\mathcal{A}}^n$ as the set of $\overline{\mathbb{M}}^n$-semi-static trading strategies.
\end{definition}

The payoff using semi-static trading strategy $(\overline H,a,b,\overline\mu,c)\in\overline{\mathcal{A}}^n$ w.r.t. the prices $\alpha,\beta,\gamma$ is given by
$$\Phin:=\overline H\cdot \overline S^n+a(\overline f^n-\alpha)+b(\overline\mu(\overline g^n)-\beta)-c(\overline h^n-\gamma).$$

\subsection{FTAP and hedging dualities}
\begin{definition}[No arbitrage]\label{d1}
For $n=N,N+1$, we say no arbitrage (NA) holds in $\overline{\mathbb{M}}^n$ w.r.t. the prices $\alpha,\beta,\gamma$, if for any $\overline{\mathbb{M}}^n$-semi-static trading strategy $(\overline H,a,b,\overline\mu,c)\in\overline{\mathcal{A}}^n$
$$\Phin\geq 0,\ \overline\P^n\text{-a.s.},\quad\quad\text{implies}\quad\Phin= 0,\ \overline\P^n\text{-a.s.}.$$
We say strict no arbitrage (SNA) holds in $\overline{\mathbb{M}}^n$, if there exists $\eps>0$ such that NA holds in $\overline{\mathbb{M}}^n$ w.r.t. the prices $\alpha-\eps,\beta-\eps,\gamma+\eps$.
\end{definition}

Obviously, we have the following.
\begin{corollary}
NA$^2$ and SNA$^2$ are equivalent to NA and SNA in $\overline{\mathbb{M}}^N$, respectively.
\end{corollary}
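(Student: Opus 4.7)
The plan is to build a payoff-preserving bijection between $\tilde{\mathcal{A}}^N$ and $\overline{\mathcal{A}}^N$, and then use the product structure $\overline{\P}^N=\P\otimes P^N$ together with the full support of $P^N$ on $\T^N$ to transfer the almost-sure statements appearing in the two definitions of NA.

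First I would match the trading and liquidating strategies. Given $\overline{H}\in\overline{\mathcal{H}}^N$, set $\tilde{H}(\mathbf{t})(\omega):=\overline{H}(\omega,\mathbf{t})$. The generators $\mathcal{F}_r\times\T^N$ and $\{\theta^k\le s\}$, $s\le r$, $k=1,\dots,N$, of $\overline{\mathcal{F}}_r^N$ resolve $\omega$ only through $\mathcal{F}_r$ and each coordinate $t^k$ only through $t^k\wedge r$ together with the indicator $\{t^k>r\}$; hence $\overline{H}_r(\omega,\mathbf{t})=\overline{H}_r(\omega,\mathbf{s})$ whenever, coordinate by coordinate, either $t^k=s^k\le r$ or both $t^k>r$ and $s^k>r$. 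For any $\mathbf{t},\mathbf{s}\in\T^N$ with $r<r^*=\inf_{k\in I}(t^k\wedge s^k)$, indices $k\notin I$ satisfy $t^k=s^k$ automatically, while indices $k\in I$ satisfy $t^k\wedge s^k>r$; the preceding condition therefore holds coordinate by coordinate, which yields $\tilde{H}\in\tilde{\mathcal{H}}^N$. Reading the same computation backward defines the inverse map from $\tilde{\mathcal{H}}^N$ to $\overline{\mathcal{H}}^N$, and the identical argument gives a bijection $\overline{\mu}\leftrightarrow\tilde{\mu}$ between $(\overline{\mathcal{L}}^N)^M$ and $(\tilde{\mathcal{L}}^N)^M$. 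Keeping $(a,b,c)$ unchanged, the definitions of $\overline{S}^N,\overline{f}^N,\overline{g}^N$ and $\overline{h}^N$ immediately yield
$$\overline{\Phi}_{\alpha,\beta,\gamma}^N(\overline{H},a,b,\overline{\mu},c)(\omega,\mathbf{t})=\tilde{\Phi}_{\alpha,\beta,\gamma}^N(\tilde{H},a,b,\tilde{\mu},c)(\mathbf{t})(\omega).$$

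Second, since $P^N$ has full support on the finite set $\T^N$, Fubini gives $\overline{\P}^N(A)=\sum_{\mathbf{t}\in\T^N}P^N(\{\mathbf{t}\})\,\P(A_{\mathbf{t}})$ for any measurable $A\subseteq\overline{\Omega}^N$, with $A_{\mathbf{t}}=\{\omega:(\omega,\mathbf{t})\in A\}$; as every $P^N(\{\mathbf{t}\})$ is strictly positive, $\overline{\P}^N(A)=0$ iff $\P(A_{\mathbf{t}})=0$ for every $\mathbf{t}$. Applied to $A=\{\overline{\Phi}^N<0\}$ and $A=\{\overline{\Phi}^N\ne 0\}$, this converts $\overline{\P}^N$-a.s. inequalities and equalities on $\overline{\Phi}^N$ into the corresponding $\P$-a.s. statements for $\tilde{\Phi}^N(\mathbf{t})$ uniformly in $\mathbf{t}\in\T^N$. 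Combined with the bijection from Step 1, this proves NA in $\overline{\mathbb{M}}^N$ $\iff$ NA$^2$. The SNA equivalence is then immediate, since both SNA$^2$ and SNA in $\overline{\mathbb{M}}^N$ are defined as the existence of some $\eps>0$ under which the corresponding NA condition holds at the shifted prices $(\alpha-\eps,\beta-\eps,\gamma+\eps)$.

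The only step requiring real attention is the first one, where one has to line up the combinatorial non-anticipativity condition defining $\tilde{\mathcal{H}}^N$ (encoded by the quantity $r^*$) with the sigma-algebra generators of $\overline{\mathcal{F}}_r^N$; once this identification is pinned down the rest is bookkeeping, which is presumably why the authors phrase the statement as a corollary.
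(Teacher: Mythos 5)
Your proof is correct, and it is the argument the paper has in mind when it labels the corollary ``obvious'' (no proof is given). The bijection $\tilde H(\mathbf t)(\omega) \leftrightarrow \overline H(\omega,\mathbf t)$ is payoff-preserving, the identification of the non-anticipativity condition on $\tilde\cH^N$ with $\overline{\mathbb F}^N$-adaptedness is carried out correctly (the atoms of the $\T^N$-factor of $\overline{\mathcal F}_r^N$ are exactly ``$t^k$ known if $\le r$, else only $t^k>r$'', which matches the $r<r^*$ condition), and since $\T^N$ is finite and $P^N$ has full support, $\overline\P^N$-a.s.\ statements reduce to $\P$-a.s.\ statements uniformly over $\mathbf t\in\T^N$ exactly as you say.
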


For $n=N,N+1$, we will denote the collection martingale measures on $(\overline\Omega^n,\overline{\mathcal{F}}_T^n,\overline{\mathbb{F}}^n)$ by $\mathcal{M}^n$. 
Define the subset of $\mathcal{M}^n$ that are equivalent to $\P^n$ and are satisfying the distributional constraints (that come from having to price the given option prices correctly) by
\begin{equation}\label{e1}
\overline\cQ^n:=\left\{Q\sim\overline\P^n:\ \overline S^n\ \text{is a $Q$-martingale},\ E_Q\left[\overline f^n\right]< \alpha,\ E_Q\left[\overline h^n\right]>\gamma,\sup_{\tau\in\overline\cT^n}E_Q\left[\overline g_\tau^n\right]<\beta \right\},
\end{equation}
where $\overline\cT^n$ represents the set of $\overline{\mathbb{F}}^n$-stopping times,
$$\sup_{\tau\in\overline\cT^n}E_Q[\overline g_\tau^n]:=\left(\sup_{\tau\in\overline\cT^n}E_Q\left[\overline{g_\tau^1}^n\right],\dotso,\sup_{\tau\in\overline\cT^n}E_Q\left[\overline{g_\tau^M}^n\right]\right),$$
and the inequalities above are understood component-wise. 
\begin{theorem}[FTAP]\label{t1}
For $n=N,N+1$, SNA in $\overline{\mathbb{M}}^n\ \EQUIV \overline\cQ^n\neq\emptyset$.
\end{theorem}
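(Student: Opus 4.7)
My plan is to prove both directions. Because on $\overline{\mathbb{M}}^n$ every shorted American option has already been converted into a European random variable via the extensions $\overline{h^k}^n$, the proofs for $n=N$ and $n=N+1$ are structurally identical; only the dimension of the enlarged exercise-time factor changes. I argue for a generic $n$.

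\textbf{Sufficiency} ($\overline\cQ^n\neq\emptyset\Rightarrow$ SNA). Pick $Q\in\overline\cQ^n$. Since $f,g,h$ are bounded, the three strict inequalities defining $\overline\cQ^n$ in \eqref{e1} upgrade to a uniform $\eps$-gap
$$E_Q[\overline f^n]\le\alpha-\eps,\quad E_Q[\overline h^n]\ge\gamma+\eps,\quad \sup_{\tau\in\overline\cT^n}E_Q[\overline g_\tau^n]\le\beta-\eps,$$
componentwise, for some $\eps>0$. If $(\overline H,a,b,\overline\mu,c)\in\overline{\mathcal{A}}^n$ satisfies $\overline\Phi_{\alpha-\eps,\beta-\eps,\gamma+\eps}^n(\overline H,a,b,\overline\mu,c)\ge 0$ $\overline\P^n$-a.s., I take $Q$-expectation (using the standard discrete-time localisation to handle the stochastic integral): the $\overline H\cdot\overline S^n$ term vanishes since $\overline S^n$ is a $Q$-martingale, the $a$- and $c$-terms are nonpositive from the sign of $a,c\ge 0$ together with the first two displayed inequalities, and the $b$-term is nonpositive because $E_Q[\overline\mu(\overline g^n)]\le\sup_{\tau}E_Q[\overline g_\tau^n]\le\beta-\eps$, where the first bound is the standard Snell-envelope fact that an adapted liquidating strategy never beats the optimal stopping value (extract the level-set stopping times $\tau_m:=\inf\{t:\sum_{s\le t}\overline\mu_s\ge m\}$ and integrate $m$ over $[0,1]$). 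Hence $E_Q[\overline\Phi]\le 0$, and combined with $\overline\Phi\ge 0$ $Q$-a.s.\ this forces $\overline\Phi=0$ $Q$-a.s., so $\overline\P^n$-a.s.\ by equivalence.

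\textbf{Necessity} (SNA$\Rightarrow\overline\cQ^n\neq\emptyset$). Fix the $\eps>0$ from SNA and form the convex cone
$$\cC:=\{\overline\Phi_{\alpha-\eps,\beta-\eps,\gamma+\eps}^n(\overline H,a,b,\overline\mu,c)-\psi:(\overline H,a,b,\overline\mu,c)\in\overline{\mathcal{A}}^n,\ \psi\in L^\infty_+(\overline\P^n)\}.$$
SNA is precisely the assertion $\cC\cap L^\infty_+(\overline\P^n)=\{0\}$. The plan is a Kreps--Yan separation in $(L^\infty(\overline\P^n),\sigma(L^\infty,L^1))$: once $\cC$ is shown to be weak-$\star$ closed, each nonzero $\xi\in L^\infty_+$ can be separated from $\cC$ by a nonnegative $L^1$-functional, and an exhaustion argument upgrades this to a strictly positive density whose normalisation yields a probability $Q\sim\overline\P^n$ with $E_Q[\overline\Phi]\le 0$ for every $\overline\Phi\in\cC$. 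Testing against $\pm\overline H$ with $a=b=c=0$ forces $\overline S^n$ to be a $Q$-martingale; testing against each coordinate of $(a,b,c)$ and absorbing the $\eps$-slack gives $E_Q[\overline f^n]<\alpha$, $E_Q[\overline h^n]>\gamma$, and $E_Q[\overline\mu(\overline g^n)]\le\beta-\eps$ for every $\overline\mu\in(\overline\cL^n)^M$. Since the point-mass liquidating strategies embed stopping times into $\overline\cL^n$ (and conversely liquidating strategies are dominated by Snell envelopes), the last of these is equivalent to $\sup_{\tau\in\overline\cT^n}E_Q[\overline g_\tau^n]<\beta$, yielding $Q\in\overline\cQ^n$.

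\textbf{The main obstacle} is the weak-$\star$ closedness of $\cC$. The finite-dimensional pieces carried by $(a,c,\overline H)$ are handled by a Dalang--Morton--Willinger type reduction inside the discrete, finite-horizon enlarged space, and the separability of $\overline{\cF}_T^n$ keeps the functional-analytic framework tractable. The genuinely new difficulty comes from the longed American block $\{b\,\overline\mu(\overline g^n):b\in\R_+^M,\,\overline\mu\in(\overline\cL^n)^M\}$, which is parametrised by an infinite-dimensional set of liquidating strategies. Here I would invoke the Baxter--Chacon (weak star) compactness of $(\overline\cL^n)^M$ emphasised in the introduction together with the boundedness of $\overline g^n$: along any sequence $(b^m,\overline\mu^m)$ one extracts a subsequence along which $b^m\overline\mu^m(\overline g^n)$ converges in the appropriate weak sense to an element of the same form, closing the cone. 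This compactness is the precise reason boundedness of $g$ was imposed, and it is the technical crux of the separation step.
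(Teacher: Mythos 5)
The paper's own proof of Theorem~\ref{t1} is a one-line citation to [ZZ8, Theorem~3.1], observing that the abstract framework of [ZZ8, Section~3] applies verbatim to the enlarged space $\overline{\mathbb{M}}^n$; it does not re-prove anything. Your proposal instead reconstructs the underlying argument directly. Comparing the two: your sufficiency step is exactly the standard direction and is correct (the $\eps$-gap from finitely many strict inequalities, $Q$-expectation killing the gains process, and the Snell-envelope bound $E_Q[\overline\mu(\overline g^n)]\le\sup_\tau E_Q[\overline g_\tau^n]$ via level-set stopping times). Your necessity step, a Kreps--Yan separation in $L^\infty(\overline\P^n)$ under the weak-star topology with Baxter--Chacon compactness of $\overline\cL^n$ controlling the American block, is precisely the mechanism that the paper attributes to [ZZ8, Theorem~3.1] in the remark immediately following this theorem (``a separating hyperplane argument... closedness of a certain set in $\mathbb{L}^\infty$ under the weak star topology... weak compactness of liquidating strategies''). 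So in substance you are re-deriving the cited result rather than taking a different route. Two technical points are worth flagging. First, as written your cone $\cC$ need not live in $L^\infty(\overline\P^n)$ because $\overline S^n$ is not assumed bounded, so $\overline H\cdot\overline S^n$ can be unbounded; the standard fix is to separate the set $\cC\cap L^\infty$ (or work in $L^0$ and reduce), and you should state this explicitly before invoking Kreps--Yan. Second, the weak-star closedness of this cone --- which you correctly identify as the crux --- is only sketched; you gesture at Baxter--Chacon compactness and a DMW-type reduction, but the actual normalisation and diagonal-subsequence argument when $(a^m,b^m,c^m,\overline H^m,\overline\mu^m)$ is unbounded, and the passage from convergence along liquidating strategies to a limit of the required form, are the real content and need to be spelled out (this is the work that [ZZ8] carries out and the paper delegates to it). With those two items filled in, your outline is sound.
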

\begin{proof}
The result is implied by \cite[Theorem 3.1]{ZZ8}. Indeed, the probability space introduced in \cite[Section 3]{ZZ8} is general enough to apply it to the enlarged space $\overline{\mathbb{M}}^n$.
\end{proof}

\begin{remark}
Let us point out that the proof of \cite[Theorem 3.1]{ZZ8} uses a separating hyperplane argument. The boundedness of options is used for the proof of the closedness of a certain set in $\mathbb{L}^\infty$ under the weak star topology. In particular, the weak compactness of liquidating strategies is crucial for the proof of the closedness.
\end{remark}

\begin{remark}\label{r2}
Intuitively, SNA in $\overline{\mathbb{M}}^N$ should be equivalent to SNA in $\overline{\mathbb{M}}^{N+1}$, since the additional information coming from $\theta^{N+1}$ plays no role in terms of no-arbitrage. The equivalence can also be verified from \thref{t1}. 

Indeed, if SNA holds in $\overline{\mathbb{M}}^N$, then there exists $Q\in\overline\cQ^N$ by \thref{t1}. For $t=0,\dotso,T$, let $Q^t:=Q\otimes\delta_{\{t\}}$. Then it is easy to see that $Q^t \in \mathcal{M}^{N+1}$ satisfying the inequalities in \eqref{e1} with $n=N+1$. Let
$$Q':=\frac{1}{T+1}\sum_{t=0}^T Q^t=\frac{1}{T+1}Q\otimes\left(\delta_{\{0\}}+\dotso+\delta_{\{T\}}\right).$$
Then $$Q'\in\overline\cQ^{N+1}.$$ Therefore, SNA also holds in $\overline{\mathbb{M}}^{N+1}$ by \thref{t1}.

Conversely, if SNA holds in $\overline{\mathbb{M}}^{N+1}$, then there exists $R'\in\overline\cQ^{N+1}$ by \thref{t1}. Let $R$ be the restriction of $R'$ on $\left(\overline\Omega^N,\overline{\mathcal{F}}_T^N\right)$. Noting that $\sup_{\tau\in\overline\cT^N}E_R\left[\overline{g_\tau^M}^N\right]\leq \sup_{\tau\in\overline\cT^{N+1}}E_{R'}\left[\overline{g_\tau^M}^{N+1}\right]$, we have $R\in\overline\cQ^N$. This implies that SNA also holds in $\overline{\mathbb{M}}^N$.
\end{remark}

\begin{remark}
If on the original probability space $(\Omega,\mathcal{F},\mathbb{F},\P)$ there exists a martingale measure $Q\sim\P$ such that $E_Q[f]<\alpha$, $\sup_{\tau\in\cT}E_Q[g_\tau]<\beta$, and $\sup_{\tau\in\cT}E_Q[h_\tau]>\gamma$ (where $\cT$ is the set of $\mathbb{F}$-stopping times), then intuitively we expect that SNA would hold in $\overline{\mathbb{M}}^N$ (and thus $\overline{\mathbb{M}}^{N+1}$ by \reref{r2}). This can also be seen from \thref{t1}.

Indeed, for $k=1,\dotso,N$, let $\tau^k\in\cT$ be such that $E_Q[h_{\tau^k}^k]>\gamma^k$. Now define $Q'$ on $\left(\overline\Omega^N,\overline{\mathcal{F}}_T^N\right)$ by
$$Q'(A):=\int_{\overline\Omega^{N}}1_A(\omega,t^1,\dotso,t^N)\,Q(d\omega)\,d\delta_{\{(\tau^1(\omega),\dotso,\tau^N(\omega))\}},\quad A\in\overline{\mathcal{F}}_T^N.$$
Then it can be show that $Q' \in \mathcal{M}^N$ satisfying the inequalities in \eqref{e1} with $n=N$. Let $P$ be a probability measure on $(\mathbb{T}^N,\mathcal{B}(\mathbb{T}^N))$ with full support and let $Q'':=Q\otimes P$. Then it can be shown that $Q''\sim\overline\P^n$ and $Q'' \in \mathcal{M}^N$ satisfying $E_{Q''}[\overline f^n]< \alpha$ and $\sup_{\tau\in\overline\cT^n}E_{Q''}[\overline g_\tau^n]<\beta$. For $\lambda\in(0,1)$, let $Q_\lambda:=(1-\lambda)Q'+\lambda Q''$. Then we can show that $Q_\lambda\in\overline\cQ^N$ for $\lambda$ close to $0$ enough. Therefore, SNA holds in $\overline{\mathbb{M}}^N$.
\end{remark}

For the following definition it would be helpful to recall the difference between the extensions $\overline\phi^N$ and $\overline\phi^{N+1}$ (see \reref{r1}).

\begin{definition}
We define the sub-hedging price of $\phi$ by
\begin{equation}\label{e3}
\underline\pi(\phi):=\sup\left\{x\in\R:\ \exists(\overline H,a,b,\overline\mu,c)\in\overline{\mathcal{A}}^N\text{ and }\overline\eta\in\overline{\cL}^N,\ \text{s.t.}\ \PhiN+\overline\eta(\overline\phi^N)\geq x,\ \overline\P^N\text{-a.s.}\right\},
\end{equation}
and its super-hedging price of by
\begin{equation}\label{e4}
\overline\pi(\phi):=\inf\left\{x\in\R:\ \exists(\overline H,a,b,\overline\mu,c)\in\overline{\mathcal{A}}^{N+1},\ \text{s.t.}\ x+\PhiNN\geq\overline\phi^{N+1},\ \overline\P^{N+1}\text{-a.s.}\right\}.
\end{equation}
\end{definition}
Obviously, we have the following.
\begin{corollary}
$\overline\pi^2(\phi)=\overline\pi(\phi)\quad\text{and}\quad\underline\pi^2(\phi)=\underline\pi(\phi).$
\end{corollary}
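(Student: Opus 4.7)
The plan is to establish this corollary as a direct identification of strategies and payoffs between the two setups, once we see that the $\overline{\mathbb{F}}^n$-adaptedness condition on the enlarged space encodes exactly the non-anticipativity condition in the definition of $\tilde{\mathcal{H}}^n$ and $\tilde\cL^n$.

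First, I would set up a bijection between $\tilde{\mathcal{H}}^n$ and $\overline{\mathcal{H}}^n$ (and analogously between $\tilde\cL^n$ and $\overline\cL^n$). The key observation is that $\overline{\mathcal{F}}_t^n = \sigma(\mathcal{F}_t\times\mathbb{T}^n, \{\theta^k\leq s\}: s\leq t, k=1,\dotso,n)$ makes $\theta^k$ exactly observable at time $t$ on the event $\{\theta^k\leq t\}$ and otherwise only known to exceed $t$. Thus, given $\tilde H\in\tilde{\mathcal{H}}^n$, defining $\overline H_r(\omega,t^1,\dotso,t^n):=\tilde H_r(t^1,\dotso,t^n)(\omega)$ produces an $\overline{\mathbb{F}}^n$-adapted process: the non-anticipativity condition $\tilde H_r(\mathbf{t})=\tilde H_r(\mathbf{s})$ whenever $\mathbf{s}$ and $\mathbf{t}$ agree on all coordinates $k$ with $s^k\wedge t^k\leq r$ translates precisely to $\overline{\mathcal{F}}_r^n$-measurability. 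Conversely, any $\overline H\in\overline{\mathcal{H}}^n$ defines a $\tilde H\in\tilde{\mathcal{H}}^n$ by fixing the $(t^1,\dotso,t^n)$ argument, and the two constructions are mutual inverses. The same argument applies to the liquidating strategies $\tilde\mu,\tilde\eta$.

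Next, I would match the payoffs. Under the extensions $\overline S^n,\overline f^n,\overline g^n$ (which only depend on $\omega$) and $\overline{h^k}^n(\overline\omega^n)=h^k_{t^k}(\omega)$, $\overline\phi^{N+1}(\overline\omega^{N+1})=\phi_{t^{N+1}}(\omega)$, the definition of $\Phin$ on the enlarged space evaluated at $(\omega,\mathbf{t})$ coincides term-by-term with $\tPhin$ under the above identification. Similarly, $\overline\eta(\overline\phi^N)(\omega,\mathbf{t})$ equals $(\tilde\eta(\mathbf{t}))(\phi)(\omega)$ and $\overline\phi^{N+1}(\omega,\mathbf{t})$ equals $\phi_{t^{N+1}}(\omega)$.

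Finally, I would translate the a.s. statements. Since $\overline\P^n=\P\otimes P^n$ with $P^n$ having full support on the countable set $\mathbb{T}^n$, a Fubini argument gives that an inequality holds $\overline\P^n$-a.s.\ if and only if it holds $\P$-a.s.\ for each fixed $\mathbf{t}\in\mathbb{T}^n$. Combining the bijection of strategies, the pointwise equality of payoffs, and this equivalence of a.s.\ statements, the suprema and infima defining $\overline\pi^2(\phi),\underline\pi^2(\phi)$ and $\overline\pi(\phi),\underline\pi(\phi)$ are taken over identical sets of real numbers, hence coincide. I do not expect any real obstacle here; everything is routine book-keeping, which is why the authors label the statement as a corollary.
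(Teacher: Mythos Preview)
Your proposal is correct and is exactly the routine identification the paper has in mind; the paper itself gives no proof, stating only ``Obviously, we have the following'' before the corollary. You have simply spelled out the bijection $\tilde{\mathcal{A}}^n\leftrightarrow\overline{\mathcal{A}}^n$, the pointwise equality of payoffs, and the equivalence of $\overline\P^n$-a.s.\ with ``$\P$-a.s.\ for every $\mathbf{t}\in\mathbb{T}^n$'' (via the full support of $P^n$), which together make the two sets over which the sup/inf are taken literally equal.
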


Recall the martingale measure set $\overline\cQ^n$ defined in \eqref{e1}. We have the following sub- and super-hedging dualities without model ambiguity.

\begin{theorem}[Hedging dualities]\label{tt1}
Let SNA hold in $\overline{\mathbb{M}}^N$ (and thus $\overline{\mathbb{M}}^{N+1}$ by \reref{r2}). Then
$$\underline\pi(\phi)=\inf_{Q\in\overline\cQ^N}\sup_{\tau\in\overline\cT^N}E_Q\left[\overline\phi_\tau^N\right]\quad\text{and}\quad\overline\pi(\phi)=\sup_{Q\in\overline\cQ^{N+1}}E_Q\left[\overline\phi^{N+1}\right].$$
Moreover, there exist optimal sub- and super-hedging strategies.
\end{theorem}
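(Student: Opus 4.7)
The plan is to reduce both dualities to the analogous hedging dualities in \cite{ZZ8} by working on the enlarged probability space, exactly as \thref{t1} did for the FTAP. The key observation is that on $\overline{\mathbb{M}}^n$ the shorted American options become bounded, $\overline{\mathcal F}_T^n$-measurable random variables $\overline h^n$, i.e., genuine European payoffs. Consequently the semi-static trading setup on $\overline{\mathbb{M}}^n$ is of exactly the type treated in \cite{ZZ8}: dynamic trading in $\overline S^n$ adapted to $\overline{\mathbb F}^n$, static positions in bought Europeans $\overline f^n$ and sold Europeans $\overline h^n$, and divisible longed American options $\overline g^n$ liquidated by strategies in $\overline\cL^n$. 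The constraints $E_Q[\overline h^n]>\gamma$ appearing in \eqref{e1} already reflect this Europeanization.

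For the sub-hedging duality I take $n=N$ and view $\overline\phi^N$ as an American option on $\overline{\mathbb{M}}^N$. Weak duality,
\[
\underline\pi(\phi)\le\inf_{Q\in\overline\cQ^N}\sup_{\tau\in\overline\cT^N}E_Q\left[\overline\phi^N_\tau\right],
\]
is routine: apply $E_Q$ to the hedging inequality in \eqref{e3}, use the $Q$-martingale property of $\overline S^N$ to kill the stochastic integral, use the strict inequalities defining $\overline\cQ^N$ to bound the $a,b,c$ terms above by zero, and bound the liquidation term by $\sup_{\tau\in\overline\cT^N}E_Q[\overline\phi^N_\tau]$ via the Snell envelope. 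The reverse inequality, together with the existence of an optimal sub-hedger, is the sub-hedging duality of \cite{ZZ8} applied to $\overline{\mathbb{M}}^N$. Its proof combines SNA with a separating hyperplane argument in $\mathbb{L}^\infty(\overline\P^N)$, where the weak-star (Baxter--Chacon) compactness of $\overline\cL^N$ provides both the closedness of the cone of super-replicable payoffs and the attainment of the optimal liquidating strategy; the boundedness of $f,g,h,\phi$ is essential here.

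For the super-hedging duality I take $n=N+1$. The decisive simplification is that $\overline\phi^{N+1}(\omega,t^1,\ldots,t^{N+1})=\phi_{t^{N+1}}(\omega)$ is a bounded $\overline{\mathcal F}_T^{N+1}$-measurable random variable, so \eqref{e4} is the problem of super-hedging a bounded European payoff on $\overline{\mathbb{M}}^{N+1}$ using only stocks, bought and sold Europeans, and divisible longed American options. The ``$\ge$'' direction is again weak duality against any $Q\in\overline\cQ^{N+1}$, and the matching upper bound together with attainment follows from the European super-hedging duality of \cite{ZZ8} applied verbatim on $\overline{\mathbb{M}}^{N+1}$, the argument being structurally identical to the sub-hedging case but without the Snell-envelope step.

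The main obstacle is essentially bookkeeping: one must verify that $(\overline\Omega^n,\overline{\mathcal F}_T^n,\overline{\mathbb F}^n,\overline\P^n)$ satisfies the structural hypotheses of \cite{ZZ8}, in particular that $\overline{\mathcal F}_T^n$ is separable, the extended options are bounded, and the full support of $P^n$ makes the equivalence $Q\sim\overline\P^n$ in \eqref{e1} the right notion so that equivalent martingale measures on the enlarged space capture all no-arbitrage prices. Since $\T^n$ is finite, $\overline{\mathcal F}_T^n$ inherits separability from $\mathcal F$ and the extensions are bounded by construction, so these verifications are immediate and the cited dualities transfer with no change.
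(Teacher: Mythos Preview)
Your proposal is correct and follows exactly the paper's approach: the paper's proof simply invokes \cite[Theorem 3.2]{ZZ8} on the enlarged space $\overline{\mathbb{M}}^n$, noting that the framework there is general enough to apply, which is precisely the reduction you carry out (with additional detail on weak duality and the verification that the structural hypotheses of \cite{ZZ8} are met).
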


\begin{proof}
The result is implied by \cite[Theorem 3.2]{ZZ8}, since the probability space introduced in \cite[Section 3]{ZZ8} is general enough to apply it to the enlarged space $\overline{\mathbb{M}}^n$.
\end{proof}

\begin{remark}
Let us point out that the hedging dualities in \cite[Theorem 3.2]{ZZ8} follows from \cite[Theorem 3.1]{ZZ8}. As for the existence of optimal hedging strategies, the Bax-Chacon topology (see e.g., \cite{Edgar}) of liquidating strategies is used and that is where the boundedness (or integrability) of options gets involved.
\end{remark}

\begin{remark}
It is possible that
$$\inf_{Q\in\overline\cQ^N}\sup_{\tau\in\overline\cT^N}E_Q\left[\overline\phi_\tau^N\right]>\sup_{\tau\in\overline\cT^N}\inf_{Q\in\overline\cQ^N}E_Q\left[\overline\phi_\tau^N\right].$$
We refer to \cite[Example 2.1]{ZZ4} for such an example. In fact, the right-hand-side of the above would be the sub-hedging price of $\phi$ if we assume $\phi$ is not divisible (see \cite[Theorem 2.1]{ZZ4}). Note that the divisibility matters for longed American options including the sub-hedged $\phi$.
\end{remark}

\begin{remark}
It can be shown that
\begin{equation}\label{e2}
\inf_{Q\in\overline\cQ^N}\sup_{\tau\in\overline\cT^N}E_Q\left[\overline\phi_\tau^N\right]\leq\sup_{Q\in\overline\cQ^N}\sup_{\tau\in\overline\cT^N}E_Q\left[\overline\phi_\tau^N\right]\leq\sup_{Q\in\overline\cQ^{N+1}}E_Q\left[\overline\phi^{N+1}\right].
\end{equation}
Indeed, for any $Q\in\overline\cQ^N$, let $\tau^\eps\in\overline\cT^N$ be an $\eps$ optimizer for $\sup_{\tau\in\overline\cT^N}E_Q\left[\overline\phi_\tau^N\right]$. Define a probability measure on $Q'$ on $\left(\overline\Omega^{N+1},\overline{\mathcal{F}}_T^{N+1}\right)$ by
$$Q'(A):=\int_{\overline\Omega^{N+1}}1_A(\omega,t^1,\dotso,t^N,t^{N+1})\,dQ(\omega,t^1,\dotso,t^N)\,d\delta_{\{\tau^\eps(\omega,t^1,\dotso,t^N)\}},\quad A\in\overline{\mathcal{F}}_T^{N+1}.$$
Then it can be shown that $Q' \in \mathcal{M}^{N+1}$ satisfying the inequalities in \eqref{e1} with $n=N+1$. Moreover,
$$E_{Q'}\left[\overline\phi^{N+1}\right]=E_Q\left[\overline\phi_{\tau^\eps}^N\right]\geq\sup_{\tau\in\overline\cT^N}E_Q\left[\overline\phi_\tau^N\right]-\eps.$$
Let
$$Q'':=\frac{1}{T+1}Q\otimes\left(\delta_{\{0\}}+\dotso+\delta_{\{T\}}\right).$$
Then it can be shown that $Q''\in\overline\cQ^{N+1}$. Let
$$Q_\lambda:=(1-\lambda)Q'+\lambda Q''.$$
Then it is easy to see that $Q_\lambda\in\overline\cQ^{N+1}$ for any $\lambda\in(0,1)$. Moreover, by choosing $\lambda$ close to $0$ enough, we can let $Q_\lambda$ be such that
$$\sup_{\hat Q\in\overline\cQ^{N+1}}E_{\hat Q}\left[\overline\phi^{N+1}\right]\geq E_{Q_\lambda}\left[\overline\phi^{N+1}\right]\geq E_{Q'}\left[\overline\phi^{N+1}\right]-\eps\geq\sup_{\tau\in\overline\cT^N}E_Q\left[\overline\phi_\tau^N\right]-2\eps.$$
By the arbitrariness of $Q$ and $\eps$, we have \eqref{e2} holds.

In fact, \eqref{e2} should hold on an intuitive level, because the second term in \eqref{e2} corresponds to the super-hedging price of $\phi$ in the case where the holder of $\phi$ reveals the stopping strategy at the beginning to the hedger. We refer to \cite[Section 3.1]{ZZ4} for a detailed discussion.

Let us also point out that in \eqref{e2} the third term may be strictly greater than the second term. We refer to \cite[Example 3.1]{ZZ4}, \cite{2016arXiv160402274H}, and \cite[Example 2.9]{Tan} for such examples.
\end{remark}

\section{No-arbitrage and hedging under model uncertainty}

In this section, we extend the FTAP and hedging dualities to the case of model uncertainty, which is described by a collection of measures which is not necessarily dominated. We will formulate the arbitrage and hedging directly on the enlarged space. 
Theorems \ref{t3} and \ref{t4} are the main results of this section. In this section, our notation will slightly change to accommodate the model uncertainty.

\subsection{Original space}
We follow the set-up in \cite{nutz2}. For any set $\mathbb{X}$, let $\mathcal{B}(\mathbb{X})$ be its Borel sigma algebra, and $\mathfrak{P}(\mathbb{X})$ be the set of probability measures on $(\mathbb{X},\mathcal{B}(\mathbb{X}))$. Let $\Xi$ be a complete separable metric space and $T\in\mathbb{N}$ be the time horizon. Let $\Xi_t:=\Xi^t$ be the $t$-fold Cartesian product for $t=1,\dotso,T$ (with convention $\Xi_0$ is a singleton). Denote $\Omega:=\Xi_T$. We denote by $\mathcal{F}_t$ the universal completion of $\mathcal{B}(\Xi_t)$, and $\mathbb{F}:=(\mathcal{F}_t)_{t=0,\dotso,T}$. For each $t\in\{0,\dotso,T-1\}$ and $\omega\in\Xi_t$, we are given a nonempty convex set of probability measures $\mathcal{P}_t(\omega)$ on $(\Xi,\mathcal{B}(\Xi))$. We assume that for each $t$, the graph of $\mathcal{P}_t$ is analytic, which ensures that $\mathcal{P}_t$ admits a universally measurable selector, i.e., a universally measurable kernel $P_t:\ \Xi_t\rightarrow \mathfrak{P}(\Xi)$ such that $P_t(\omega)\in\mathcal{P}_t(\omega)$ for all $\omega\in\Xi_t$. Let
\begin{equation}\label{prob}
\mathcal{P}:=\{P_0\otimes\dotso\otimes P_{T-1}:\ P_t(\cdot)\in\mathcal{P}_t(\cdot),\ t=0,\dotso,T-1\},
\end{equation}
where each $P_t$ is a universally measurable selector of $\mathcal{P}_t$, and
$$P_0\otimes\dotso\otimes P_{T-1}(A)=\int_{\Omega_1}\dotso\int_{\Omega_1} 1_A(\omega_1,\dotso,\omega_T)P_{T-1}(\omega_1,\dotso,\omega_{T-1};d\omega_T)\dotso P_0(d\omega_1),\ \ \ A\in\mathcal{B}(\Omega).$$

The concepts $S,f,g,h,\alpha,\beta,\gamma,\phi$ are defined as in Section 2, except that here we require $S,g,h,\phi$ to be $(\mathcal{B}(\Xi_t))_t$-adapted, $f$ to be $\mathcal{B}(\Omega)$-measurable.

We make the following standing assumption.
\begin{assumption}\label{a1}{\ }
\begin{itemize}
\item[(i)] The set of martingale measures on $(\Omega,\mathcal{F}_t,\mathbb{F})$
$$\cQ(\alpha):=\{Q\lll\cP:\ S\text{ is a $Q$-martingale},\ E_Q[f]\leq\alpha\}$$
is weakly compact, where $Q\lll\cP$ means that there exists some $P\in\cP$ dominating $Q$.
\item[(ii)] $f$ is bounded from below, $h$ is bounded from above and upper-semicontinuous in $\omega\in\Omega$.
\item[(iii)] $g$ and $\phi$ are bounded and uniformly continuous in $\omega\in\Omega$.
\end{itemize}
\end{assumption}
\begin{remark}
We refer to \cite[Examples 5.1 \& 5.2]{ZZ8} for examples satisfying \asref{a1}.
\end{remark}
\begin{remark}
If $\phi$ is considered to be super-hedged, then $\phi$ is only required to be bounded from above and upper-semicontinuous in $\omega\in\Omega$.
\end{remark}

\subsection{Enlarged space}
For $n=N,N+1$, as in Section 2.3, let $(\overline\Omega^n,\overline{\mathcal{F}}_T^n,\overline{\mathbb{F}}^n)$ be the enlarged filtered space for $(\Omega,\mathcal{F}_T,\mathbb{F})$. All the variables $\overline f^n,\overline g^n,\overline h^n,\overline\phi^n,\overline{\mathcal{H}}^n,\overline\cL^n, \overline{\mathcal{A}}^n,\Phin$ are as before, except that we require that for any $\overline\mu=(\overline\mu_0,\dotso,\overline\mu_T)\in\overline\cL^n$, $\overline\mu_t$ is Borel measurable. 

Let
$$\overline\cP^n:=\{P\otimes R:\ P\in\cP,\ R\text{ is a probability measure on }(\mathbb{T}^n,\mathcal{B}(\mathbb{T}^n))\}, $$
and denote $\overline{\mathbb{M}}^n:=(\overline\Omega^n,\overline{\mathcal{F}}_T^n,\overline{\mathbb{F}}^n,\overline\cP^n)$.

\subsection{FTAP and hedging dualities}
\begin{definition}[No arbitrage]\label{d2}
For $n=N,N+1$, we say no arbitrage (NA) holds in $\overline{\mathbb{M}}^n$ w.r.t. the prices $\alpha,\beta,\gamma$, if for any $\overline{\mathbb{M}}^n$-semi-static trading strategy $(\overline H,a,b,\overline\mu,c)\in\overline{\mathcal{A}}^n$
$$\Phin\geq 0,\ \overline\cP^n\text{-q.s.}\footnote{For a set of probability measures $\cP$, we say a property holds $\cP$-q.s., if the property holds $P$-a.s. for any $P\in\cP$.},\quad\quad\text{implies}\quad\Phin= 0,\ \overline\cP^n\text{-q.s.}.$$
We say strict no arbitrage (SNA) holds in $\overline{\mathbb{M}}^n$, if there exists $\eps>0$ such that NA holds in $\overline{\mathbb{M}}^n$ w.r.t. the prices $\alpha-\eps,\beta-\eps,\gamma+\eps$.
\end{definition}

\begin{definition}[Hedging prices]
We define the sub-hedging price of $\phi$ by
\begin{equation}\label{e5}
\underline\pi(\phi):=\sup\left\{x\in\R:\ \exists(\overline H,a,b,\overline\mu,c)\in\overline{\mathcal{A}}^N\text{ and }\overline\eta\in\overline{\cL}^N,\ \text{s.t.}\ \PhiN+\overline\eta(\overline\phi^N)\geq x,\ \overline\cP^N\text{-q.s.}\right\},
\end{equation}
and its super-hedging price by
\begin{equation}\label{e6}
\overline\pi(\phi):=\inf\left\{x\in\R:\ \exists(\overline H,a,b,\overline\mu,c)\in\overline{\mathcal{A}}^{N+1},\ \text{s.t.}\ x+\PhiNN\geq\overline\phi^{N+1},\ \overline\cP^{N+1}\text{-q.s.}\right\}.
\end{equation}
\end{definition}
For $n=N,N+1$ and $(\alpha',\beta',\gamma')\in\R^L\times\R^M\times\R^N$, define the set of martingale measures on $(\overline\Omega^n,\overline{\mathcal{F}}_T^n,\overline{\mathbb{F}}^n)$,
\begin{equation}\label{e29}
\overline\cQ^n(\alpha',\beta',\gamma'):=\left\{Q\lll\overline\cP^n:\ \overline S^n\text{ is a $Q$-martingale},\ E_Q\left[\overline f^n\right]\leq\alpha,\ E_Q\left[\overline h^n\right]\geq\gamma,\ \sup_{\tau\in\overline\cT^n}E_Q\left[\overline g_\tau^n\right]\leq\beta\right\}.
\end{equation}

\begin{theorem}[Hedging dualities]\label{t3}
Let \asref{a1} hold. If SNA holds in $\overline{\mathbb{M}}^N$, then
$$\underline\pi(\phi)=\inf_{Q\in\overline\cQ^N(\alpha,\beta,\gamma)}\sup_{\tau\in\overline\cT^N}E_Q\left[\overline\phi_\tau^N\right]\quad\text{and}\quad\overline\pi(\phi)=\sup_{Q\in\overline\cQ^{N+1}(\alpha,\beta,\gamma)}E_Q\left[\overline\phi^{N+1}\right].$$
Moreover, there exist $Q'\in\overline\cQ^N(\alpha,\beta,\gamma)$ and $Q''\in\overline\cQ^{N+1}(\alpha,\beta,\gamma)$ that attain the infimum and supremum for the dualities above.
\end{theorem}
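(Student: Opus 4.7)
The overall strategy is to combine the dominated-case duality of \thref{tt1} with a minimax argument over the non-dominated family $\overline\cP^n$. \asref{a1}(i) supplies the weak compactness of $\overline\cQ^n(\alpha,\beta,\gamma)$ on the enlarged space (it carries over from the original because $\mathbb{T}^n$ is finite), while \asref{a1}(ii)--(iii) supply the continuity needed to exchange sup and inf.

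Weak duality is direct. For any $Q\in\overline\cQ^{N+1}(\alpha,\beta,\gamma)$, integrating $x+\PhiNN\ge\overline\phi^{N+1}$ under $Q$ gives $x\ge E_Q[\overline\phi^{N+1}]$, since $\overline S^{N+1}$ is a $Q$-martingale and $E_Q[\PhiNN]\le 0$ follows from $a,b,c\ge 0$ together with $E_Q[\overline f^{N+1}]\le\alpha$, $E_Q[\overline h^{N+1}]\ge\gamma$, and $E_Q[\overline\mu(\overline g^{N+1})]\le\sup_\tau E_Q[\overline g_\tau^{N+1}]\le\beta$. Taking $\sup_Q$ yields $\overline\pi(\phi)\ge\sup_Q E_Q[\overline\phi^{N+1}]$; for sub-hedging, the randomized-stopping inequality $E_Q[\overline\eta(\overline\phi^N)]\le\sup_\tau E_Q[\overline\phi_\tau^N]$ gives the analogous bound.

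Strong duality is where the minimax enters. For each fixed $P\in\overline\cP^n$, the filtered space $(\overline\Omega^n,\overline{\mathcal{F}}_T^n,\overline{\mathbb{F}}^n,P)$ is dominated, so \thref{tt1} yields $\overline\pi^P(\phi)=\sup_{Q\sim P,\,Q\in\overline\cQ^{N+1}}E_Q[\overline\phi^{N+1}]$, and the analogous formula holds for sub-hedging. Since $\overline\cP^n$-quasi-sure hedging is $P$-a.s.\ hedging for every $P\in\overline\cP^n$, one immediately has $\overline\pi(\phi)\ge\sup_P\overline\pi^P(\phi)$. To close the gap my plan is to invoke a Sion-type minimax on the product of the weakly compact convex set $\overline\cQ^{N+1}(\alpha,\beta,\gamma)$ and the convex set of semi-static strategies, collapsing $\sup_P\sup_{Q\sim P}$ into a single $\sup_{Q\in\overline\cQ^{N+1}(\alpha,\beta,\gamma)}$. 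Linearity in $Q$ and convexity in the strategy are clear, weak compactness of the dual set is \asref{a1}(i) lifted to the enlargement, and the upper-semicontinuity of $Q\mapsto E_Q[\overline\phi^{N+1}]$, $Q\mapsto E_Q[\overline h^n]$, and $Q\mapsto\sup_\tau E_Q[\overline g_\tau^n]$ under weak convergence, needed both for the minimax step and for closedness of the dual set, comes from \asref{a1}(ii)--(iii) via a time-discretization of the Snell envelope that reduces $\sup_\tau$ to a finite supremum.

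Existence of optimizers $Q'\in\overline\cQ^N(\alpha,\beta,\gamma)$ and $Q''\in\overline\cQ^{N+1}(\alpha,\beta,\gamma)$ then follows from the weak compactness of $\overline\cQ^n(\alpha,\beta,\gamma)$ and the upper-semicontinuity of the dual objectives. The main technical obstacle I anticipate is precisely this upper-semicontinuity of the Snell value $Q\mapsto\sup_{\tau\in\overline\cT^N}E_Q[\overline\phi_\tau^N]$ (and its analogue for $\overline g^n$): the optimal $\tau$ depends discontinuously on $Q$, so one must reformulate the supremum through liquidating strategies and invoke the weak-star (Baxter--Chacon) compactness that already powers \thref{tt1} (see the remarks following it), exploiting the uniform continuity in \asref{a1}(iii) to discretize the stopping problem in a $Q$-uniform way so that the discretized value commutes with the weak limit in $Q$.
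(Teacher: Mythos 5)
Your approach diverges from the paper's at the core of the strong-duality argument, and I believe it has a genuine gap.

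Your plan is to fix a single $P\in\overline\cP^n$, invoke \thref{tt1} on the dominated space $(\overline\Omega^n,\overline{\mathcal{F}}_T^n,\overline{\mathbb{F}}^n,P)$ to get $\overline\pi^P(\phi)=\sup_{Q\sim P,\,Q\in\overline\cQ^{N+1}}E_Q[\overline\phi^{N+1}]$, and then collapse $\sup_P\sup_{Q\sim P}$ via a Sion-type minimax. This fails at the very first step: SNA in the quasi-sure sense does \emph{not} imply SNA under any individual $P\in\overline\cP^n$. Quasi-sure no-arbitrage only forbids strategies whose payoff is nonnegative under \emph{every} model; a payoff that is nonnegative and sometimes positive under one particular $P$ may easily be negative under another, so $P$-arbitrage can coexist with q.s.\ no-arbitrage (e.g.\ a binomial one-step model where $\cP$ contains two mutually singular Diracs makes each individual $P$ admit arbitrage while $\cP$-q.s.\ NA holds). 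So \thref{tt1} is simply not applicable per-$P$, and the formula $\overline\pi^P(\phi)=\sup_{Q\sim P}E_Q[\cdots]$ that your argument rests on is unavailable. Separately, even if that hurdle were cleared, the ``Sion-type minimax on the product of $\overline\cQ^{N+1}(\alpha,\beta,\gamma)$ and the convex set of semi-static strategies'' does not apply: one side of a minimax must be compact in a topology making the payoff appropriately semicontinuous, and the set of semi-static strategies is unbounded with no useful compactness. This is exactly the obstruction that forces the non-dominated literature to abandon the naive inf-sup exchange and proceed by dynamic programming.

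The paper's actual route is structurally different and avoids both problems. It never decomposes over individual $P$. Instead: (i) \leref{l5} establishes the super-hedging duality with dynamic trading alone in the enlarged non-dominated setting via a one-step dynamic-programming / measurable-selection argument adapted from Bouchard--Nutz/Tan; (ii) \leref{l2} extends it by induction to include the statically traded European options $\overline f^n$, $-\overline h^n$ (which is where the shorted American options, converted to Europeans on the enlargement, live); (iii) \leref{l11} is a bespoke minimax for the \emph{liquidating strategies} of the longed American options, proved through a discretization of $\overline\Omega^n$ exploiting the uniform continuity of $g,\phi$ together with Baxter--Chacon compactness and lower semicontinuity of the Snell value; and (iv) \leref{l12} gives weak compactness of $\overline\cQ^n(\alpha,\gamma)$, which combined with Sion's theorem (now applied only to exchange $\sup_b$ with $\inf_Q$, a genuinely compact-versus-convex situation) yields the stated duality and the attainment. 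Your weak-duality paragraph is fine, and your intuitions about upper-semicontinuity of $Q\mapsto\sup_\tau E_Q[\cdot]$ and the role of Baxter--Chacon compactness are in the right spirit, but the proposal as stated cannot be repaired by tightening details: the reduction to dominated sub-models is the wrong starting point.
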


\begin{theorem}[FTAP]\label{t4}
Let \asref{a1} hold. Then SNA holds in $\overline{\mathbb{M}}^N$ if and only if there exists $\eps>0$ such that for any $P\in\overline\cP^N$, there exists $Q\in\overline\cQ^N(\alpha-\eps,\beta-\eps,\gamma+\eps)$ dominating $P$.
\end{theorem}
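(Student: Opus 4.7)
The plan is to treat the two directions separately. The forward direction $(\Leftarrow)$ is a short expectation computation, while the reverse direction $(\Rightarrow)$ reduces, via the enlargement, to the non-dominated FTAP developed in \cite[Section 5]{ZZ8}.

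For $(\Leftarrow)$, suppose for some $\eps>0$ that every $P\in\overline\cP^N$ admits a dominating $Q_P\in\overline\cQ^N(\alpha-\eps,\beta-\eps,\gamma+\eps)$. I would verify NA w.r.t.\ the slightly less favorable prices $(\alpha-\eps/2,\beta-\eps/2,\gamma+\eps/2)$, which by definition gives SNA. Fix any $(\overline H,a,b,\overline\mu,c)\in\overline{\mathcal A}^N$ with $\overline\Phi_{\alpha-\eps/2,\beta-\eps/2,\gamma+\eps/2}^N\geq 0$ $\overline\cP^N$-q.s., together with some $P\in\overline\cP^N$ and its corresponding $Q_P$. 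Because $f,g,h$ are bounded by Assumption~\ref{a1}, the integrand $\overline H\cdot\overline S^N$ is bounded from below and hence a $Q_P$-supermartingale with nonpositive $Q_P$-expectation. Combining this with $E_{Q_P}[\overline f^N]\leq\alpha-\eps$, $E_{Q_P}[\overline h^N]\geq\gamma+\eps$, and $E_{Q_P}[\overline\mu(\overline g^N)]\leq\sup_{\tau\in\overline\cT^N}E_{Q_P}[\overline g_\tau^N]\leq\beta-\eps$, a direct computation gives
\begin{equation*}
0\leq E_{Q_P}\!\left[\overline\Phi_{\alpha-\eps/2,\beta-\eps/2,\gamma+\eps/2}^N\right]\leq -\frac{\eps}{2}\left(\sum_i a^i+\sum_j b^j+\sum_k c^k\right),
\end{equation*}
forcing $a=b=c=0$. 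Then $\overline H\cdot\overline S^N\geq 0$ with nonpositive $Q_P$-expectation equals zero $Q_P$-a.s., hence $P$-a.s.\ because $Q_P$ dominates $P$; since $P$ was arbitrary, $\overline\Phi^N=0$ $\overline\cP^N$-q.s.

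For $(\Rightarrow)$, suppose SNA holds, so there exists $\eps>0$ with NA w.r.t.\ $(\alpha-\eps,\beta-\eps,\gamma+\eps)$. The whole purpose of the enlargement $\overline{\mathbb M}^N$ is that the shorted American options $h^k$ are now represented as European payoffs $\overline{h^k}^N(\overline\omega^N)=h^k_{t^k}(\omega)$ on $\overline\Omega^N$. The semi-static trading problem on $\overline{\mathbb M}^N$ therefore fits the framework of \cite[Section 5]{ZZ8}: dynamic trading in $\overline S^N$ together with static buying of the European payoffs $\overline f^N$, static selling of the European payoffs $\overline{h^k}^N$ (equivalently, static buying of $-\overline{h^k}^N$), and static trading in divisible longed American options $\overline g^N$. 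The non-dominated FTAP from that section, applied to the shifted prices $(\alpha-\eps,\beta-\eps,\gamma+\eps)$, produces for each $P\in\overline\cP^N$ a measure $Q_P\lll\overline\cP^N$ equivalent to (hence dominating) $P$ under which $\overline S^N$ is a $Q_P$-martingale and $E_{Q_P}[\overline f^N]\leq\alpha-\eps$, $E_{Q_P}[\overline h^N]\geq\gamma+\eps$, $\sup_\tau E_{Q_P}[\overline g_\tau^N]\leq\beta-\eps$; i.e.\ $Q_P\in\overline\cQ^N(\alpha-\eps,\beta-\eps,\gamma+\eps)$, with $\eps$ inherited uniformly from SNA.

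The main obstacle I expect is verifying that the FTAP of \cite[Section 5]{ZZ8} genuinely transfers to the enlarged probabilistic setting. Concretely, I would check (i)~that the lifted family $\overline\cP^N=\{P\otimes R:P\in\cP,\,R\in\mathfrak P(\mathbb T^N)\}$ satisfies the analyticity/measurable-selection conditions required there (straightforward since $\mathbb T^N$ is finite, so the extra kernels are trivially analytic); (ii)~that Assumption~\ref{a1}(i) lifts to weak compactness of $\overline\cQ^N(\alpha,\beta,\gamma)$, using that $\overline f^N,\overline g^N,\overline h^N$ inherit boundedness and (upper) semicontinuity from their $\omega$-counterparts (the $t^k$ variables ranging over a finite set cause no topological trouble); and (iii)~that the continuity in Assumption~\ref{a1}(iii) is preserved by $\overline g^N$. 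Once these structural checks are in place, the conclusion follows directly from the FTAP in \cite{ZZ8}.
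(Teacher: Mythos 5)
Your $(\Leftarrow)$ direction is correct and matches the paper, which dismisses it as "easy to show": the expectation computation under a single dominating $Q_P\in\overline\cQ^N(\alpha-\eps,\beta-\eps,\gamma+\eps)$ forces $a=b=c=0$ and then $\overline H\cdot\overline S^N=0$ $P$-a.s.\ for each $P$.

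The $(\Rightarrow)$ direction, however, does not track the paper's proof and badly underestimates the work. Your premise is that after the enlargement the market "fits the framework of \cite[Section 5]{ZZ8}," so the non-dominated FTAP there can be invoked after a few structural checks. But the paper does \emph{not} cite \cite[Section 5]{ZZ8} for this direction the way it cites \cite[Theorem 3.1]{ZZ8} in the dominated case; instead it re-develops the entire machinery for the enlarged space precisely because the non-dominated results of \cite{nutz2} and \cite{ZZ8} are formulated on the canonical product space $\Xi^T$, whereas $\overline\Omega^N=\Xi^T\times\T^N$ carries the non-product filtration $\overline\cF_t^N=\sigma(\cF_t\times\T^N,\{\theta^k\leq s\},s\leq t)$ that reveals $\theta^k$ gradually. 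The paper flags this explicitly in the proof of \leref{l4} ("the results of \cite{nutz2} does not directly apply because they work on the canonical space"), and \leref{l5} is a full re-run of the \cite{Tan} dynamic programming / measurable-selection argument on state spaces $\Xi_t\times\{0,\dotso,t\}^n$. None of your checks (i)--(iii) addresses this, which is the genuine obstacle.

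Beyond that, the paper's actual proof of \thref{t4} is an induction on the number $M$ of longed American options: the base case $M=0$ is \leref{l2}(i), itself a nontrivial induction on the number of static European options adapted from \cite{ZZ3}; and the inductive step invokes the hedging duality \thref{t3} (through \eqref{e90}), \leref{l12}'s weak compactness of $\overline\cQ^N(\alpha,\gamma)$, and a careful convex-combination construction $Q_\lambda=\lambda Q^1+(1-\lambda)Q^2$ with tailored shifts so that all constraints, including the American one $\sup_\tau E_{Q_\lambda}[\overline{g^m_\tau}^N]\leq\beta^*$, are met uniformly. Your proposal contains none of this; in particular the longed American options $\overline g^N$ are genuinely non-European in the enlarged space, and handling them requires the discretization/minimax \leref{l11}, which is where \asref{a1}(iii) actually gets used — not merely as "continuity preserved by $\overline g^N$." So while your high-level observation (shorted Americans become Europeans after enlargement) is the correct starting point and is shared with the paper, the claim that "the conclusion follows directly from the FTAP in \cite{ZZ8}" after three structural checks replaces the bulk of the argument with a citation that does not apply in the stated generality.
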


\subsection{Proofs of Theorems \ref{t3} and \ref{t4}}

Before we give the proofs of the main results of this section, we will obtain some preliminary results. The proofs of the theorems are at the end of this sub-section. Throughout we use $n$, when we mean that a statement hold for both $n=N$ and $n=N+1$.

\begin{definition}
We say NA$(\overline\cP^n)$ holds, if for any $\overline H\in\overline\cH^n$,
$$\overline H\cdot\overline S\geq 0,\ \overline\cP^n\text{-q.s.},\quad\quad\text{implies}\quad\overline H\cdot\overline S=0,\ \overline\cP^n\text{-q.s.}.$$
\end{definition}
Define the set of martingale measures on $(\overline\Omega^n,\overline{\mathcal{F}}_T^n,\overline{\mathbb{F}}^n)$,
$$\overline\cQ^n:=\{Q\lll\overline\cP^n:\ \overline S^n\text{ is a $Q$-martingale}\}.$$

\begin{lemma}\label{l4}
NA$(\overline\cP^n)$ holds if and only if for any $P\in\overline\cP^n$ there exists $Q\in\overline\cQ^n$ dominating $P$.
\end{lemma}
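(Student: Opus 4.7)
The plan is to reduce the claim to the quasi-sure FTAP of Bouchard--Nutz \cite{nutz2} on the original space and then lift martingale measures via the product structure $\overline\cP^n = \{P \otimes R : P \in \cP,\ R \in \mathfrak{P}(\mathbb{T}^n)\}$. The key observation is that $\overline S^n$ depends only on the $\omega$-coordinate, so the $\mathbb{T}^n$-factor of the enlargement neither creates nor destroys arbitrage. My first step is to prove the equivalence of $\mathrm{NA}(\overline\cP^n)$ and the classical $\mathrm{NA}(\cP)$. For the implication from enlarged to original, embed any $H \in \cH$ via $\overline H_t(\omega, t^1, \dots, t^n) := H_t(\omega)$, which lies in $\overline{\mathcal{H}}^n$ and satisfies $\overline H \cdot \overline S^n = H \cdot S$, while every $P \in \cP$ extends to $P \otimes R \in \overline\cP^n$. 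For the reverse direction, fix $\overline H \in \overline{\mathcal{H}}^n$ with $\overline H \cdot \overline S^n \geq 0$ $\overline\cP^n$-q.s.; for each $(t^1,\dots,t^n) \in \mathbb{T}^n$, the slice $H^{t^1,\dots,t^n}(\omega) := \overline H(\omega, t^1,\dots,t^n)$ is $\mathbb{F}$-adapted by the definition of $\overline{\mathbb{F}}^n$. Since $P \otimes \delta_{(t^1,\dots,t^n)} \in \overline\cP^n$ for every $P \in \cP$, the quasi-sure inequality descends to $H^{t^1,\dots,t^n} \cdot S \geq 0$ $\cP$-q.s., and $\mathrm{NA}(\cP)$ then gives $\cP$-q.s.\ equality; integrating against any marginal $R$ over the finite set $\mathbb{T}^n$ recovers $\overline H \cdot \overline S^n = 0$ $\overline\cP^n$-q.s.

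Next, I invoke the Bouchard--Nutz FTAP of \cite{nutz2}: $\mathrm{NA}(\cP)$ is equivalent to the statement that every $P \in \cP$ is dominated by some martingale measure $Q$ on $(\Omega,\mathcal{F}_T,\mathbb{F})$ with $Q \lll \cP$. To lift this, given an arbitrary $\overline P = P \otimes R \in \overline\cP^n$, pick such a $Q$ dominating $P$ and set $\overline Q := Q \otimes R$. I would then verify: $\overline Q \gg \overline P$ follows from $Q \gg P$; $\overline Q \lll \overline\cP^n$ because $Q \ll P'$ for some $P' \in \cP$ implies $\overline Q \ll P' \otimes R \in \overline\cP^n$; and $\overline S^n$ is a martingale under $\overline Q$ with respect to $\overline{\mathbb{F}}^n$ because the tensor-product form renders the exercise-time coordinates $\overline Q$-independent of $\omega$, so for every $\omega$-measurable integrand the conditional expectation under $\overline Q$ given $\overline{\mathcal{F}}_t^n$ reduces to the conditional expectation under $Q$ given $\mathcal{F}_t$.

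The converse implication of the lemma is routine: given $\overline H \in \overline{\mathcal{H}}^n$ with $\overline H \cdot \overline S^n \geq 0$ $\overline\cP^n$-q.s.\ and $\overline Q \in \overline\cQ^n$ dominating a prescribed $\overline P \in \overline\cP^n$, the integral $\overline H \cdot \overline S^n$ is a $\overline Q$-local martingale bounded below, hence a $\overline Q$-supermartingale; combining $E_{\overline Q}[\overline H \cdot \overline S^n] \leq 0$ with nonnegativity forces $\overline H \cdot \overline S^n = 0$ $\overline Q$-a.s.\ and therefore $\overline P$-a.s. I expect the main subtle point of the argument to lie in verifying the martingale property under the enriched filtration $\overline{\mathbb{F}}^n$; this is precisely where the tensor-product form of $\overline\cP^n$ is essential, since a non-product coupling between $\omega$ and the $\mathbb{T}^n$-coordinates could destroy the martingale property of $\overline S^n$.
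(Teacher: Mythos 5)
Your proposal is correct and follows essentially the same route as the paper: reduce $\mathrm{NA}(\overline\cP^n)$ to $\mathrm{NA}(\cP)$ on the original space, invoke the Bouchard--Nutz FTAP (\cite[Theorem~4.5]{nutz2}) to obtain a dominating martingale measure $Q$ with $P\ll Q\ll P'\in\cP$, and lift it as $Q\otimes R$. You supply more detail than the paper (the slicing argument via $P\otimes\delta_{(t^1,\dots,t^n)}$ and the verification that $\overline S^n$ remains an $\overline{\mathbb{F}}^n$-martingale under the product measure), but the skeleton is identical.
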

\begin{proof}
The results of \cite{nutz2} does not directly apply because they work on the canonical space. But extending their result is not difficult. First observe that as usual the sufficiency is obvious. Let us focus on the necessity. If NA$(\overline\cP^n)$ holds, then it is easy to see that NA$(\cP)$ holds in the original space $(\Omega,\mathcal{F}_T,\mathbb{F})$ (i.e., for any $\mathbb{F}$-adapted process $H$, if $H\cdot S\geq 0\ \cP$-q.s., then $H\cdot S=0\ \cP$-q.s.). Then for any $P\otimes R\in\overline\cP^n$ with $P\in\cP$ and $R$ a probability measure on $(\mathbb{T}^n,\mathcal{B}(\mathbb{T}^n))$, by \cite[Theorem 4.5]{nutz2} there exists $Q$ being a martingale measure on the original space $(\Omega,\mathcal{F}_T,\mathbb{F})$ and $P'\in\cP$ such that $P\ll Q\ll P'$. Then $Q\otimes R\in \overline\cQ^n$ dominates $P\otimes R$.
\end{proof}

Let $\zeta:\overline\Omega^n\mapsto\R$ be Borel measurable. Define the super-hedging price of $\zeta$ using only dynamic trading in $S$ as
$$\pi^0(\zeta):=\inf\{x\in\R:\ \exists\overline H\in\overline\cH^n,\ \text{s.t. }x+\overline H\cdot\overline S^n\geq\zeta,\ \overline\cP^n\text{-q.s.}\}.$$

\begin{lemma}\label{l5}
Let NA$(\overline\cP^n)$ hold. Then
\begin{equation}\label{e905}
\pi^0(\zeta)=\sup_{Q\in\overline\cQ^n}E_Q[\zeta].
\end{equation}
Moreover, there exists an optimal super-hedging strategy.
\end{lemma}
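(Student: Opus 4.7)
The plan is to apply the superhedging duality of \cite{nutz2} to the enlarged filtered space $\overline{\mathbb{M}}^n$, extending their framework in the same spirit as the proof of \leref{l4}. The easy direction $\sup_{Q\in\overline\cQ^n}E_Q[\zeta]\leq\pi^0(\zeta)$ is standard: if $x+\overline H\cdot\overline S^n\geq\zeta$ holds $\overline\cP^n$-q.s., then for any $Q\in\overline\cQ^n$ the stochastic integral $\overline H\cdot\overline S^n$ is a generalized $Q$-martingale with nonpositive expectation (via the admissibility convention of \cite{nutz2}, applicable because the horizon is finite), so taking $Q$-expectations yields $x\geq E_Q[\zeta]$.

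For the nontrivial inequality together with attainment, I would recast $\overline{\mathbb{M}}^n$ within the Bouchard--Nutz canonical framework by enlarging the per-period state space at time $t$ to $\overline\Xi_t:=\Xi\times\{0,1\}^n$, where the $\{0,1\}^n$-component records the indicators of $\{\theta^k=t\}$ for $k=1,\dotso,n$ (subject to the consistency constraint that each $\theta^k$ is revealed exactly once). Under this identification $\overline\Omega^n$ is Borel isomorphic to $\overline\Xi_0\times\dotso\times\overline\Xi_T$, the filtration $\overline{\mathbb{F}}^n$ becomes the natural filtration of the canonical space, and each element of $\overline\cP^n=\{P\otimes R:P\in\cP,\ R\in\mathfrak{P}(\mathbb{T}^n)\}$ disintegrates into per-period kernels combining the $\cP_t$ with kernels on the finite set $\{0,1\}^n$. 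These per-period kernels inherit analytic graphs from the $\cP_t$'s, so the superhedging theorem of \cite{nutz2} applies in the enlarged canonical space and yields both \eqref{e905} and the existence of an optimal $\overline H$.

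The main obstacle will be verifying the analytic-graph property for the enlarged per-period kernels and checking that the $\overline\cP^n$-q.s. relation agrees with the q.s. relation generated by the resulting per-period selection structure; this is a bookkeeping exercise rather than a conceptual difficulty, but it must be handled carefully since $\overline\cP^n$ is not a priori presented as a product of such kernels. Once these checks are in place, the backward dynamic programming of \cite{nutz2} transfers verbatim, with value functions depending on both the $\omega$-coordinate and the partial information about exercise times revealed so far, and the optimal hedge is produced by the usual measurable-selection argument.
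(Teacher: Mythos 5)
Your route is genuinely different from the paper's. The paper (following \cite{Tan}) re-derives the backward dynamic programming from scratch on the enlarged space: it defines a one-step operator $\mathcal{E}_t$ with an explicit case split on which of the $\theta^k$ have already been revealed, proves upper-semianalyticity and backward induction by hand (Steps 1--2), extracts the optimal hedge from measurable selectors (Step 3), and finally removes the boundedness restriction by truncation (Step 4). You instead propose to re-encode $\overline\Omega^n$ as an honest Bouchard--Nutz product space and invoke their superhedging theorem as a black box. Both are viable, and your approach trades re-proving the DPP for re-encoding bookkeeping, which is a sensible exchange.

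However, there is one point you treat as a bookkeeping triviality that is actually the crux of the argument and deserves to be spelled out. When you disintegrate $\overline\cP^n$ into per-period kernels on $\Xi\times\{0,1\}^n$, the set of measures generated by the Bouchard--Nutz construction is \emph{strictly larger} than the paper's $\overline\cP^n=\{P\otimes R:\,P\in\cP,\ R\in\mathfrak{P}(\T^n)\}$: the per-period kernels permit the exercise-time indicator at step $t+1$ to depend on $\omega_{t+1}$ and on the whole past, whereas $\overline\cP^n$ consists only of product measures with $\theta$ independent of $\omega$. Calling this set $\overline\cP'^n\supsetneq\overline\cP^n$, the black-box application of Bouchard--Nutz yields the duality with $\overline\cP'^n$-polar sets and $\overline\cP'^n$-dominated martingale measures, not the ones in the statement. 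One must then prove that the two collections share the same polar sets and the same dominated martingale measures. This does hold, but only because $\T^n$ is finite: if $A$ is $\overline\cP^n$-polar, then choosing $R$ with full support forces the $\omega$-section $A_\omega$ to be empty $\cP$-q.s., and emptiness is inherited by every conditional kernel; likewise, any kernel-based $\overline Q$ has $\Omega$-marginal in $\cP$ by convexity of each $\cP_t(\omega)$ (the mixture over the finitely many indicator histories is a finite convex combination), and so is dominated by a product measure with uniform $R$. Without these two lemmas the transfer from the re-encoded framework back to $\overline\cP^n$ fails, so they are not mere bookkeeping and should be stated explicitly.

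Two further encoding issues should be handled with care. First, $\overline{\mathcal{F}}_0^n$ contains the nontrivial event $\{\theta^k=0\}$, so the time-$0$ state $\overline\Xi_0$ cannot be a singleton as in the standard Bouchard--Nutz set-up; you need either a nontrivial initial coordinate $\{0,1\}^n$ recording $\{\theta^k=0\}$ or an explicit time shift, and the Bouchard--Nutz statement must be checked in that variant. Second, the consistency constraint you impose means $\overline\Omega^n$ is identified only with a Borel subset of $\overline\Xi_0\times\dotso\times\overline\Xi_T$, not the full product; this is harmless provided the constraint is encoded in the per-period kernels (which also keeps the nonemptiness and convexity hypotheses of $\overline{\mathcal{P}}_t$ intact), but it is worth stating.
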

\begin{proof}
We adapt the arguments in \cite[Section 3.1]{Tan}, and will only sketch the proof. For simplicity of the presentation, we assume that $n=2$. For $t=0,\dotso,T$, let
$$\overline\Xi_t=\Xi_t\times\{0,\dotso,t\}^2\quad\text{and}\quad\overline{\mathcal{G}}_t:=\cF_t\vee\sigma(\theta^1\wedge t)\vee\sigma(\theta^2\wedge t),$$
where $\theta^1$ and $\theta^2$ are defined as in \eqref{e901}.  For $t=0,\dotso,T-1$ and $\omega\in\Xi_t$, let
$$\cQ_t(\omega):=\{Q\lll\cP_t(\omega):\ \E_Q[y(S_{t+1}(\omega,\cdot)-S_t(\omega))]=0,\ \forall y\in\R^d\}.$$
We provide the main idea of the proof in the following four steps.

\textbf{Step 1}. Let $t\in\{0,\dotso,T-1\}$. For a upper-semianalytic function $\chi:\overline\Xi^{t+1}\mapsto\bar\R$, define the map $\mathcal{E}_t(\chi):\overline\Xi^t\mapsto\bar\R$ by
\begin{eqnarray}
\notag&&\mathcal{E}_t(\chi)(\overline\omega):=\sup_{Q\in\cQ_t(\omega)}\bigg\{E_Q[\chi(\omega,\cdot,s^1,s^2)]1_{\{s^1<t,\,s^2<t\}}\\
\notag &&+E_Q[\chi(\omega,\cdot,s^1,t)]\vee E_Q[\chi(\omega,\cdot,s^1,t+1)]1_{\{s^1<t,\,s^2=t\}}\\
\notag &&+E_Q[\chi(\omega,\cdot,t,s^2)]\vee E_Q[\chi(\omega,\cdot,t+1,s^2)]1_{\{s^1=t,\,s^2<t\}}\\
\notag &&+E_Q[\chi(\omega,\cdot,t,t)]\vee E_Q[\chi(\omega,\cdot,t,t+1)]\vee E_Q[\chi(\omega,\cdot,t+1,t)]\vee E_Q[\chi(\omega,\cdot,t+1,t+1)]1_{\{s^1=s^2=t\}}\bigg\},
\end{eqnarray}
for $\overline\omega=(\omega,s^1,s^2)\in\overline\Xi^t$. From \cite[Lemma 4.10]{nutz2}, $\mathcal{E}_t(\chi)$ is upper-semianalytic. Moreover, there exist universally measurable functions $y^i:\overline\Xi^t\mapsto\R^d,\ i=1,2,3,4$ such that
\begin{eqnarray}
\notag\mathcal{E}_t(\chi)(\overline\omega)+y^1(\overline\omega)(S_{t+1}(\omega,\cdot)-S_t(\omega))&\geq&\chi(\omega,\cdot,s^1,s^2),\\
\notag\mathcal{E}_t(\chi)(\overline\omega)+y^2(\overline\omega)(S_{t+1}(\omega,\cdot)-S_t(\omega))&\geq&\chi(\omega,\cdot,s^1,t+1),\\
\notag\mathcal{E}_t(\chi)(\overline\omega)+y^3(\overline\omega)(S_{t+1}(\omega,\cdot)-S_t(\omega))&\geq&\chi(\omega,\cdot,t+1,s^2),\\
\notag\mathcal{E}_t(\chi)(\overline\omega)+y^4(\overline\omega)(S_{t+1}(\omega,\cdot)-S_t(\omega))&\geq&\chi(\omega,\cdot,t+1,t+1),
\end{eqnarray}
$\cP(\omega)$-q.s. for any $\overline\omega=(\omega,s^1,s^2)\in\overline\Xi^t$ such that NA$(\cP_t(\omega))$\footnote{That is, for any $y\in\R^d$, if $y(S_{t+1}(\omega,\cdot)-S_t(\omega))\geq 0\ \cP_t(\omega)$-q.s., then $y(S_{t+1}(\omega,\cdot)-S_t(\omega))=0\ \cP_t(\omega)$-q.s..} holds.

\textbf{Step 2}. Now assume $\zeta$ is bounded from above. Then following the argument in \cite[Lemma 3.2]{Tan} we can show that
\begin{equation}\label{e903}
\sup_{Q\in\overline\cQ_\text{loc}^2}E_Q[\zeta]=\mathcal{E}_0\circ\dotso\circ\mathcal{E}_{T-1}(\zeta),
\end{equation}
where
$$\cQ_\text{loc}^2:=\{Q\lll\overline\cP^2: \overline S^2\text{ is a $(Q,\overline\F^2)$-local martingale}\}.$$
Indeed, for any $Q\in\overline\cQ_\text{loc}^2$, we can use a backward induction to show that
\begin{equation}\label{e904}
E_Q[\zeta|\overline{\mathcal{G}}_t]\leq\mathcal{E}_t\circ\dotso\circ\mathcal{E}_{T-1}(\zeta)=:\mathcal{E}^t(\zeta)
\end{equation}
$Q$-a.s. for $t=0,\dotso,T-1$. This implies ``$\leq$'' holds for \eqref{e903}. Conversely, we can perform a measurable selection argument and construct an $\eps$-optimizer to show ``$\geq$'' for \eqref{e903}. 

\textbf{Step 3}. Assume $\zeta$ is bounded from above. Recall $\mathcal{E}^t(\zeta)$ defined in \eqref{e904} and denote $\mathcal{E}^T[\zeta]=\zeta$. By Step 1, for $t=0,\dotso,T-1$ there exist universally measurable functions $y_t^i,\ i=1,2,3,4$ such that
\begin{eqnarray}
\notag y_t^1(\overline\omega)(S_{t+1}(\omega,\cdot)-S_t(\omega))&\geq&\mathcal{E}^{t+1}(\zeta)(\omega,s^1,s^2)-\mathcal{E}^t(\zeta)(\overline\omega),\\
\notag y_t^2(\overline\omega)(S_{t+1}(\omega,\cdot)-S_t(\omega))&\geq&\mathcal{E}^{t+1}(\zeta)(\omega,s^1,t+1)-\mathcal{E}^t(\zeta)(\overline\omega),\\
\notag y_t^3(\overline\omega)(S_{t+1}(\omega,\cdot)-S_t(\omega))&\geq&\mathcal{E}^{t+1}(\zeta)(\omega,t+1,s^2)-\mathcal{E}^t(\zeta)(\overline\omega),\\
\notag y_t^1(\overline\omega)(S_{t+1}(\omega,\cdot)-S_t(\omega))&\geq&\mathcal{E}^{t+1}(\zeta)(\omega,t+1,t+1)-\mathcal{E}^t(\zeta)(\overline\omega),
\end{eqnarray}
$\cP(\omega)$-q.s. for any $\overline\omega=(\omega,s^1,s^2)\in\overline\Xi^t$ with NA$(\cP_t(\omega))$ holds. Therefore,
$$\sum_{t=0}^{T-1}\overline H_t(\overline S_{t+1}^2-\overline S_t^2)\geq\sum_{t=0}^{T-1}\left[\mathcal{E}^{t+1}(\zeta)-\mathcal{E}^t(\zeta)\right]=\zeta-\mathcal{E}^0[\zeta]=\zeta-\sup_{Q\in\overline\cQ_\text{loc}^2}E_Q[\zeta],\quad\overline\cP^2\text{-q.s.},$$
where for $\overline\omega=((\omega_1,\dotso,\omega_T),s^1,s^2)\in\overline\Omega^2$,
$$\overline H_t(\overline\omega):=y_t^1([\overline\omega]_t) 1_{\{s^1\leq t,\,s^2\leq t\}}+y_t^2([\overline\omega]_t) 1_{\{s^1\leq t,\,s^2> t\}}+y_t^3([\overline\omega]_t) 1_{\{s^1> t,\,s^2\leq t\}}+y_t^4([\overline\omega]_t) 1_{\{s^1> t,\,s^2>t\}},$$
with $[\overline\omega]_t:=((\omega_1,\dotso,\omega_t),s^1,s^2)\in\overline\Xi_t$ for $t=0,\dotso,T$. Hence,
$$\pi^0(\zeta)\leq\sup_{Q\in\overline\cQ_\text{loc}^2}E_Q[\zeta].$$
Using the proof in \cite[Lemma A.3]{nutz2}, we can show that $\sup_{Q\in\overline\cQ_\text{loc}^2}E_Q[\zeta]=\sup_{Q\in\overline\cQ^2}E_Q[\zeta]$. It is easy to show the weakly duality,
$$\pi^0(\zeta)\geq\sup_{Q\in\overline\cQ^2}E_Q[\zeta].$$
Therefore, we have \eqref{e905} holds when $\zeta$ is bounded from above.

\textbf{Step 4}. In general, using \cite[Theorem 2.2]{nutz2} we can show that
$$\lim_{m\rightarrow\infty}\pi^0(\zeta\wedge m)=\pi^0(\zeta).$$
Moreover, by monotone convergence theorem,
$$\lim_{m\rightarrow\infty}\sup_{Q\in\overline\cQ^2}E_Q[\zeta\wedge m]=\sup_{Q\in\overline\cQ^2}E_Q[\zeta].$$
Hence \eqref{e905} holds. The existence of an optimal hedging strategy follows from \cite[Theorem 2.3]{nutz2}.
\end{proof}

Let $\xi:=(\xi^1,\dotso,\xi^e):\overline\Omega^n\mapsto\R^e$ be Borel measurable, representing $e$ European options available for static buying with prices $\tilde\xi:=(\tilde\xi^1,\dotso,\tilde\xi^e)\in\R^e$ in the space $(\overline\Omega^n,\overline{\mathcal{F}}_T^n,\overline{\mathbb{F}}^n)$. For $i=1,\dotso,e$, assume $E_Q|\xi^i|<\infty$ and $E_Q|\zeta|<\infty$ for any $Q\in\overline\cQ^n$.

\begin{definition}
We say NA$(\overline\cP^n)$ holds with $(\xi,\tilde\xi)$ (in the space $(\overline\Omega^n,\overline{\mathcal{F}}_T^n,\overline{\mathbb{F}}^n)$), if for any $(\overline H,a)\in\overline\cH^n\times\R_+^e$,
$$\overline H\cdot\overline S^n+a(\xi-\tilde\xi)\geq 0,\ \overline\cP^n\text{-q.s.},\quad\quad\text{then}\quad\overline H\cdot\overline S^n+a(\xi-\tilde\xi)=0,\ \overline\cP^n\text{-q.s.}.$$
We say SNA$(\overline\cP^n)$ holds with $(\xi,\tilde\xi)$, if there exists $\eps>0$ such that NA$(\overline\cP^n)$ holds with $(\xi,\tilde\xi-\eps)$.
\end{definition}
Given $\hat\xi\in\R^e$, let
$$\overline\cQ_{\hat\xi}^n:=\{Q\lll\overline\cP^n:\ \overline S^n\text{ is a $Q$-martingale},\ E_Q[\xi]\leq\hat\xi\}.$$
Define the super-hedging price of $\zeta$ with $(\xi,\hat\xi)$ as
$$\pi^e(\zeta):=\inf\{x\in\R:\ \exists(\overline H,a)\in\overline\cH^n\times\R_+^e,\ \text{s.t. }x+\overline H\cdot\overline S^n+a(\xi-\hat\xi)\geq\zeta,\ \overline\cP^n\text{-q.s.}\}.$$

\begin{lemma}\label{l2}
\begin{itemize}
\item[(i)] SNA$(\overline\cP^n)$ holds with $(\xi,\tilde\xi)$ if and only if there exists $\eps>0$, such that for any $P\in\overline\cP^n$, there exists $Q\in\overline\cQ_{\tilde\xi-\eps}^n$ dominating $P$.
\item[(ii)] Let SNA$(\overline\cP^n)$ hold with $(\xi,\tilde\xi)$. Then
\begin{equation}\label{e906}
\pi^e(\zeta)=\sup_{Q\in\overline\cQ_{\tilde\xi}^n}E_Q[\zeta].
\end{equation}
Moreover, there exists an optimal super-hedging strategy.
\end{itemize}
\end{lemma}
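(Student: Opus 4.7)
The plan is to reduce both parts to the pure dynamic-hedging setting covered by Lemmas \ref{l4} and \ref{l5} by optimizing out the static positions in $\xi$; part (ii) is obtained by a minimax exchange, and part (i) is deduced from it.

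For part (ii), optimizing first over $\overline H$ for fixed $a\in\R_+^e$ in the definition of $\pi^e(\zeta)$ gives
\[\pi^e(\zeta)=\inf_{a\in\R_+^e}\bigl\{a\cdot\tilde\xi+\pi^0(\zeta-a\cdot\xi)\bigr\}.\]
SNA$(\overline\cP^n)$ with $(\xi,\tilde\xi)$ implies NA$(\overline\cP^n)$ by restricting the arbitrage test to $a=0$, so Lemma \ref{l5} yields $\pi^0(\zeta-a\cdot\xi)=\sup_{Q\in\overline\cQ^n}E_Q[\zeta-a\cdot\xi]$ and hence
\[\pi^e(\zeta)=\inf_{a\in\R_+^e}\sup_{Q\in\overline\cQ^n}\bigl\{E_Q[\zeta]+a\cdot(\tilde\xi-E_Q[\xi])\bigr\}.\]
The integrand is affine in $a$ and affine in $Q$, so I plan to swap $\inf$ and $\sup$ via Sion's minimax theorem. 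Since $\R_+^e$ is not compact, one first truncates to $\{|a|\le R\}$ and later lets $R\to\infty$, using the coercivity provided by the SNA slack $\eps$. After the swap, the inner infimum over $a\ge 0$ equals $0$ when $E_Q[\xi]\le\tilde\xi$ and $-\infty$ otherwise, giving
\[\pi^e(\zeta)=\sup_{Q\in\overline\cQ_{\tilde\xi}^n}E_Q[\zeta].\]
Existence of an optimal $a^\ast\in\R_+^e$ follows from the same coercivity together with lower semicontinuity in $a$; combined with the optimal dynamic super-hedge of $\zeta-a^\ast\cdot\xi$ supplied by Lemma \ref{l5}, this delivers the optimal super-hedging strategy.

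For part (i), sufficiency is a direct $Q$-expectation computation: given $(\overline H,a)$ with $\overline H\cdot\overline S^n+a\cdot(\xi-\tilde\xi+\eps)\ge 0$ $\overline\cP^n$-q.s.\ and any $P\in\overline\cP^n$, pick $Q\in\overline\cQ_{\tilde\xi-\eps}^n$ dominating $P$; the martingale property of $\overline S^n$ together with $E_Q[\xi]\le\tilde\xi-\eps$ render the $Q$-expectation nonpositive, so the nonnegative integrand vanishes $Q$-a.s., hence $P$-a.s.\ by domination. For necessity, specialize part (ii) to $\zeta\equiv 0$ at the shifted prices $\tilde\xi-\eps/2$ (SNA with gap $\eps$ at $\tilde\xi$ is SNA with gap $\eps/2$ at $\tilde\xi-\eps/2$): since $\pi^e(0)=0$ under NA, the duality of part (ii) forces $\overline\cQ_{\tilde\xi-\eps/2}^n\ne\emptyset$. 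To upgrade non-emptiness to domination of a prescribed $P\in\overline\cP^n$, I would redo the measurable-selection construction underlying Lemma \ref{l4} on the enlarged market, now carrying the additional constraint $E_Q[\xi]\le\tilde\xi-\eps/2$ along in each conditional step.

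The main obstacle is the minimax exchange in part (ii): identifying a topology in which $\overline\cQ^n$ (or a relevant sublevel set) is weakly compact and $Q\mapsto E_Q[\zeta],\,E_Q[\xi]$ are continuous in the non-dominated quasi-sure setting, and rigorously establishing the coercivity needed to drop the truncation. Boundedness of $\xi,\zeta$ and the structural ingredients already exploited in the proof of Lemma \ref{l5} are what make this feasible.
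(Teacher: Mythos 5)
Your plan for (ii) is the Lagrangian/minimax route, whereas the paper argues by induction on the number $e$ of static options: at each step it adds one option $\xi^{k+1}$, considers the closed convex set $\{(E_Q[\xi^{k+1}],E_Q[\zeta]):Q\in\overline\cQ^n_{\tilde{\pmb\xi}^k}\}\subset\R^2$, and separates it from the target quadrant $(-\infty,\tilde\xi^{k+1}]\times[\pi^{k+1}(\zeta),\infty)$ when the duality would otherwise fail; the SNA slack is used to rule out a degenerate separating direction. Your decomposition $\pi^e(\zeta)=\inf_{a\ge0}\{a\cdot\tilde\xi+\pi^0(\zeta-a\cdot\xi)\}$ and the reduction via Lemma~\ref{l5} are correct, but the appeal to Sion's minimax as stated has a gap: $\zeta$ and $\xi$ are only Borel (and merely $Q$-integrable), so there is no topology on $\overline\cQ^n$ in which $Q\mapsto E_Q[\zeta]+a\cdot(\tilde\xi-E_Q[\xi])$ is upper semicontinuous, and $\overline\cQ^n$ itself is not weakly compact in this non-dominated setting. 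The repair is to project: replace $\overline\cQ^n$ by its convex image in $\R^{e+1}$ under $Q\mapsto(E_Q[\zeta],E_Q[\xi])$ and run the minimax (or a direct separation) there against the compact ball $\{|a|\le R,\ a\ge 0\}$ --- which, once one unwinds it, is exactly the paper's two-dimensional separating-hyperplane argument done one option at a time. The passage $R\to\infty$ and the ``coercivity from the SNA slack'' that you gesture at is in the paper carried out by mixing with a measure $Q^*\in\overline\cQ^n_{\tilde\xi-\eps}$ (obtained from the induction hypothesis), which is also how one shows that near-feasible $Q$'s do not blow up $E_Q[\zeta]$.

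The genuine gap is in your plan for the necessity half of (i). Redoing the measurable-selection construction of Lemma~\ref{l4} ``carrying the constraint $E_Q[\xi]\le\tilde\xi-\eps/2$ along in each conditional step'' is essentially a dead end: this is a global linear constraint on $Q$, and it does not decompose across the one-step kernels $Q_t$ that the selection argument builds; one would have to carry conditional expectations of $\xi$ as additional state, which is a different (and much harder) problem. The paper avoids this entirely: by the induction hypothesis at level $e-1$ there is, for any $P$, some $Q'\in\overline\cQ^n_{\tilde{\pmb\xi}^{k}-\eps'}$ dominating $P$; independently, the duality at level $e-1$ produces a single $Q^*\in\overline\cQ^n_{\tilde{\pmb\xi}^{k}-\eps}$ with $E_{Q^*}[\xi^{k+1}]\le\tilde\xi^{k+1}-\eps/2$; then $(1-\lambda)Q^*+\lambda Q'$ dominates $P$ for every $\lambda\in(0,1)$ and satisfies the full set of strict constraints once $\lambda$ (which may depend on $P$) is small, giving a uniform slack $\delta$. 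You should replace the measurable-selection idea with this convex-combination argument. Finally, note that the paper gets existence of an optimal hedge not from coercivity of $a\mapsto\pi^0(\zeta-a\xi)$ but from a quasi-sure closedness of the set of super-replicable claims (adapting \cite{ZZ3}); your route can work, but you would still need the closedness of the purely dynamic hedging cone from Lemma~\ref{l5}, so it is not actually shorter.
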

\begin{proof}
We adapt the arguments for \cite[Theorem 2.1]{ZZ3}, and will only sketch the proof.

Following Part 1 of the proof of \cite[Theorem 2.1]{ZZ3}, we can show that the set
$$\mathcal{C}:=\{\overline H\cdot\overline S^n+a(\xi-\tilde\xi)-W: \overline H\in\overline\cH^n,\,a\in\R_+^e,\,W\geq 0\ \overline\cP^n\text{-q.s.}\},$$
is closed. That is, if $U^j\in\mathcal{C}$ and $U^j\rightarrow U\ \overline\cP^n$-q.s. as $j\rightarrow\infty$, then $U\in\mathcal{C}$. Let $(H^j,a^j)\in\overline\cH^n\times\R_+^e$ such that
$$\pi^e(\zeta)+\frac{1}{j}+H^j\cdot\overline S^n+a^j(\xi-\tilde\xi)\geq\zeta,\quad\overline\cP^n\text{-q.s.}.$$
Then by the closedness of $\mathcal{C}$, there exists $(H,a)\in\overline\cH^n\times\R_+^e$ such that
$$\pi^e(\zeta)+H\cdot\overline S^n+a(\xi-\tilde\xi))\geq\zeta,\quad\overline\cP^n\text{-q.s.}.$$
This implies the existence of an optimal hedging strategy. 

We will use an induction to show (i) and \eqref{e906}. By Lemmas \ref{l4} and \ref{l5} the results hold for $e=0$. Assume the results hold for $e=k$, and consider the case when $e=k+1$. For $j=k,k+1$, denote $\pmb\xi^j=(\xi^1,\dotso,\xi^j)$, $\tilde{\pmb\xi^j}=(\tilde\xi^1,\dotso,\tilde\xi^j)$, $\pi^j(\cdot)$ the super-hedging price using stocks $\overline S^n$ and options $\pmb\xi^j$, and
$$\overline\cQ_{\hat{\pmb\xi}^j}^n:=\{Q\lll\overline\cP^n:\ \overline S^n\text{ is a $Q$-martingale},\ E_Q[\pmb\xi^j]\leq\hat{\pmb\xi}^j\},\quad\hat{\pmb\xi}^j\in\R^j.$$

\textbf{Proof of (i) when $e=k+1$}. The necessity part is easy to show, and we focus on the sufficiency. Since SNA holds with $(\pmb\xi^{k+1},\tilde{\pmb\xi}^{k+1})$, there exists $\eps>0$ such that SNA still holds with $(\pmb\xi^{k+1},\tilde{\pmb\xi}^{k+1}-\eps)$. Then by induction hypothesis, we have
$$\inf_{Q\in\overline\cQ_{\tilde{\pmb\xi}^k-\eps}^n}E_Q[\xi^{k+1}]=-\pi^k(-\xi^{k+1})\leq\tilde\xi^{k+1}-\eps.$$
Hence, there exists $Q^*\in\overline\cQ_{\tilde{\pmb\xi}^k-\eps}^n$ such that
\begin{equation}\label{e907}
E_{Q^*}[\xi^{k+1}]\leq\tilde\xi^{k+1}-\frac{\eps}{2}.
\end{equation}
As SNA also holds with $(\pmb\xi^{k},\tilde{\pmb\xi}^{k})$, by induction hypothesis there exists $\eps'>0$, such that for any $P\in\overline\cP^n$ there exists $Q\in\overline\cQ_{\tilde{\pmb\xi}^k-\eps'}^n$ dominating $P$. Now let $\delta:=\frac{\eps}{4}\wedge\frac{\eps'}{2}$. For any $P'\in\overline\cP^n$, let $Q'\in\overline\cQ_{\tilde{\pmb\xi}^k-\eps'}^n\subset\overline\cQ_{\tilde{\pmb\xi}^k-\delta}^n$ dominate $P'$.  Then by choosing $\lambda\in(0,1)$ and close to $0$ enough, we can show that $P'\ll(1-\lambda)Q^*+\lambda Q'\in\overline\cQ_{\tilde{\pmb\xi}^k-\delta}^n$.

\textbf{Proof of \eqref{e906} when $e=k+1$}. Without loss of generality, we may assume that $\zeta$ is bounded from above. Otherwise we can first consider $\zeta\wedge m$ and then send $m\rightarrow\infty$ as in Step 4 in the proof of \leref{l5}. It is easy to show ``$\geq$'' for \eqref{e906}, and we focus on the reverse inequality. It suffices to show that there exists $Q^j\in\overline\cQ_{\tilde{\pmb\xi}^k}^n$ such that
$$\lim_{j\rightarrow\infty}E_{Q^j}[\xi^{k+1}]\leq\tilde\xi^{k+1}\quad\text{and}\quad\lim_{j\rightarrow\infty}E_{Q^j}[\zeta]\geq\pi^{k+1}(\zeta).$$
Indeed, if such $Q^j$ exists, then one can take $Q_\lambda:=(1-\lambda)Q^j+\lambda Q^*$, where $\lambda\in(0,1)$ and $Q^*$ is chosen in \eqref{e907}; By choosing $j$ large enough and $\lambda$ close to $0$, we can show that $Q_\lambda\in\overline\cQ_{\tilde{\pmb\xi}^{k+1}}^n$ and $E_{Q_\lambda}[\zeta]$ arbitrarily close to $\pi^{k+1}(\zeta)$, which implies ``$\leq$'' for \eqref{e906}. 

Suppose such $Q^j$ does not exist, then
$$\overline{\left\{\left(E_Q[\xi^{k+1}],E_Q[\zeta]\right):\ Q\in\overline\cQ_{\tilde{\pmb\xi}^k}^n\right\}}\cap(-\infty,\tilde\xi^{k+1}]\times[\pi^{k+1},\infty)=\emptyset.$$
Then there exists $(y,z)\in\R^2\setminus\{(0,0)\}$ such that
$$\inf_{Q\in\overline\cQ_{\tilde{\pmb\xi}^k}^n}E_Q[y\xi^{k+1}+z\zeta]>\sup_{(a,b)\in(-\infty,\tilde\xi^{k+1}]\times[\pi^{k+1},\infty)}(ya+zb)\geq y\tilde\xi^{k+1}+z\pi^{k+1}(\zeta).$$
Obviously $y\geq 0$ and $z\leq 0$. There exists $\eps>0$ such that
$$y\tilde\xi^{k+1}+z(\pi^{k+1}(\zeta)-\eps)<\inf_{Q\in\overline\cQ_{\tilde{\pmb\xi}^k}^n}E_Q[y\xi^{k+1}+z\zeta].$$
Now consider the market with options $(\xi^1,\dotso,\xi^k,y\xi^{k+1}+z\zeta)$ available for static buying with prices $(\tilde\xi^1,\dotso,\tilde\xi^k,y\tilde\xi^{k+1}+z(\pi^{k+1}(\zeta)-\eps))$. It can be shown that SNA holds. Therefore, 
$$y\tilde\xi^{k+1}+z(\pi^{k+1}(\zeta)-\eps)\geq\inf_{Q\in\overline\cQ_{\tilde{\pmb\xi}^k}^n}E_Q[y\xi^{k+1}+z\zeta].$$
Contradiction.
\end{proof}

\begin{lemma}\label{l1}
SNA holds with $((\overline f^N,-\overline h^N),(\alpha,-\gamma))$ in the space $(\overline\Omega^N,\overline{\mathcal{F}}_T^N,\overline{\mathbb{F}}^N)$ if and only if SNA holds with $((\overline f^{N+1},-\overline h^{N+1}),(\alpha,-\gamma))$ in the space $(\overline\Omega^{N+1},\overline{\mathcal{F}}_T^{N+1},\overline{\mathbb{F}}^{N+1})$.
\end{lemma}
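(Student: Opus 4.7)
The plan is to use \leref{l2}(i) to translate SNA on each enlarged space into an equivalent existence statement for dominating martingale measures with the appropriate price constraints, and then pass between $\overline{\mathbb{M}}^N$ and $\overline{\mathbb{M}}^{N+1}$ by either augmenting with, or projecting out, an independent full-support coordinate on $\T$, in the spirit of \reref{r2}. Concretely, SNA with $((\overline f^n,-\overline h^n),(\alpha,-\gamma))$ in $\overline{\mathbb{M}}^n$ is equivalent by \leref{l2}(i) to the existence of some $\eps>0$ such that for every $P\in\overline\cP^n$ there is $Q\lll\overline\cP^n$ with $P\ll Q$, $\overline S^n$ an $(\overline\F^n,Q)$-martingale, $E_Q[\overline f^n]\leq\alpha-\eps$, and $E_Q[\overline h^n]\geq\gamma+\eps$.

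For ``$\Rightarrow$'', fix $P'\in\overline\cP^{N+1}$, written as $P_0\otimes R$ with $P_0\in\cP$ and $R\in\fP(\T^{N+1})$. Let $R'$ be the marginal of $R$ on the first $N$ time coordinates, so that $P_0\otimes R'\in\overline\cP^N$. By SNA in $\overline{\mathbb{M}}^N$ there exist $\eps>0$ and a measure $Q$ on $\overline\Omega^N$ with $P_0\otimes R'\ll Q\lll\overline\cP^N$, $\overline S^N$ a $(\overline\F^N,Q)$-martingale, $E_Q[\overline f^N]\leq\alpha-\eps$, and $E_Q[\overline h^N]\geq\gamma+\eps$. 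Fix any $U\in\fP(\T)$ with full support and set $Q':=Q\otimes U$ on $\overline\Omega^{N+1}\simeq\overline\Omega^N\times\T$. Disintegrating $R=R'\otimes R''$ and using $R''\ll U$ by full support yields $P'\ll Q'$, while $Q'\lll\overline\cP^{N+1}$ is immediate from $Q\lll\overline\cP^N$. Martingality of $\overline S^{N+1}$ under $(\overline\F^{N+1},Q')$ reduces to that of $\overline S^N$ under $(\overline\F^N,Q)$ by independence of the $(N{+}1)$-st coordinate and the fact that $\overline S$ depends only on $\omega$. Finally, the constraints transfer because $\overline f^{N+1}$ and the components of $\overline h^{N+1}$ for $k\leq N$ do not involve $t^{N+1}$, so their $Q'$-expectations coincide with the corresponding $Q$-expectations.

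For ``$\Leftarrow$'', fix $P\in\overline\cP^N$, written as $P_0\otimes R$ with $R\in\fP(\T^N)$. Pick $U\in\fP(\T)$ of full support and set $P':=P_0\otimes(R\otimes U)\in\overline\cP^{N+1}$. By SNA in $\overline{\mathbb{M}}^{N+1}$ there exist $\eps>0$ and $Q'\lll\overline\cP^{N+1}$ with $P'\ll Q'$, $\overline S^{N+1}$ a $(\overline\F^{N+1},Q')$-martingale, $E_{Q'}[\overline f^{N+1}]\leq\alpha-\eps$, and $E_{Q'}[\overline h^{N+1}]\geq\gamma+\eps$. Let $Q$ be the restriction of $Q'$ to $\overline\F^N_T$. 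Then $P\ll Q$ follows from marginalizing $P'\ll Q'$ to $\overline\Omega^N$, and $Q\lll\overline\cP^N$ follows from $Q'\lll\overline\cP^{N+1}$ together with the observation that the $\overline\Omega^N$-marginal of a product element of $\overline\cP^{N+1}$ again lies in $\overline\cP^N$. The inclusion $\overline\F^N_t\subset\overline\F^{N+1}_t$ and iterated conditioning, using that $\overline S^{N+1}_t$ depends only on $\omega$, yield $\overline S^N$ as an $(\overline\F^N,Q)$-martingale, and the price constraints on $\overline f^N$ and $\overline h^N$ descend since the corresponding integrands agree on $\overline\Omega^N$ and $\overline\Omega^{N+1}$.

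The principal technical point will be the bookkeeping of the enlarged filtration $\overline\F^n$ under these product and restriction operations, specifically verifying that the martingale property with respect to $\overline\F^{N+1}$ or $\overline\F^N$ (rather than merely $\F$) is preserved after augmentation by an independent full-support coordinate on $\T$, or after restriction to the first $N$ time coordinates. Since $\overline S$ depends only on $\omega$ and the extra coordinate is introduced or removed via a product with a full-support distribution, this reduces to a standard conditioning argument.
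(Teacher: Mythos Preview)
Your proof is correct and follows essentially the same approach that the paper indicates: the paper's own proof merely states that it ``is similar to the argument in \reref{r2}'' and omits the details, and your argument is precisely the model-uncertainty analogue of that remark, with \leref{l2}(i) playing the role of \thref{t1}. The only cosmetic point is that in each direction the $\eps$ should be fixed uniformly \emph{before} picking $P'$ (resp.\ $P$); you state the quantifier order correctly in your opening paragraph, so this is just a matter of phrasing in the body of the argument.
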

\begin{proof}
The proof is similar to the argument in \reref{r2} and we omit it here.
\end{proof}

\begin{lemma}\label{l11}
For $k=1,\dotso,K$, let $\mathfrak{g}^k=(\mathfrak{g}^k_t)_{t=0,\dotso,T}$ be a bounded $\overline{\mathbb{F}}^n$-adapted process uniformly continuous for $t=1,\dotso,T$, i.e.
 for $t=1,\dotso,T$ and $k=1,\dotso,K$,
$$\left|\mathfrak{g}^k_t(\omega^1,{\bf t} )-\mathfrak{g}^k_t(\omega^2,{\bf t})\right|\leq\rho\left(\max_{s=1,\dotso,t}|\omega_s^1-\omega_s^2|\right),\quad\omega^i=(\omega_1^i,\dotso,\omega_T^i)\in\Xi^T,\ i=1,2,\quad {\bf t}\in\T^n,$$
where $\rho$ is a modulus of continuity.
 Let $\mathcal{R}$ be a convex and weakly compact set of probability measures on $(\overline\Omega^n,\overline{\mathcal{F}}_T^n)$. Then
\begin{equation}\label{e73}
\sup_{\substack{\mu^k\in\overline\cL^n\\k=1,\dotso,K}}\inf_{R\in\cR}{E}_R\left[\sum_{k=1}^K\mu^k(\mathfrak{g}^k)\right]=\inf_{R\in\cR}\sup_{\substack{\mu^k\in\overline\cL^n\\k=1,\dotso,K}}{E}_R\left[\sum_{k=1}^K\mu^k(\mathfrak{g}^k)\right]=\inf_{R\in\cR}\sup_{\substack{\tau^k\in\overline\cT^n\\k=1,\dotso,K}}{E}_R\left[\sum_{k=1}^K \mathfrak{g}^k_{\tau^k}\right].
\end{equation}
\end{lemma}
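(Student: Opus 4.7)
My plan handles the two equalities separately. First I would establish the rightmost equality by fixing $R\in\mathcal{R}$ and noting that the objective decouples as $E_R[\sum_k\mu^k(\mathfrak{g}^k)]=\sum_k E_R[\mu^k(\mathfrak{g}^k)]$. For each $k$, the classical Snell-envelope theorem applied to the bounded $\overline{\mathbb{F}}^n$-adapted process $\mathfrak{g}^k$ under the single measure $R$ gives $\sup_{\mu^k\in\overline\cL^n}E_R[\mu^k(\mathfrak{g}^k)]=\sup_{\tau^k\in\overline\cT^n}E_R[\mathfrak{g}^k_{\tau^k}]$, since randomized stopping yields no strict improvement over pure stopping times under a fixed probability measure. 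Summing over $k$ and then taking the infimum over $R\in\mathcal{R}$ will give the right-hand identity in \eqref{e73}.

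For the leftmost (minimax) equality, the weak inequality $\sup_\mu\inf_R\le\inf_R\sup_\mu$ is automatic. For the reverse, I would apply Sion's minimax theorem to the bilinear functional $F(\mu,R):=E_R[\sum_k\mu^k(\mathfrak{g}^k)]$ on $X\times Y$ with $X:=(\overline\cL^n)^K$ convex and $Y:=\mathcal{R}$ convex and weakly compact. Bilinearity of $F$ automatically supplies the quasi-concavity in $\mu$ and quasi-convexity in $R$ that Sion's theorem requires.

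The hard part will be verifying the semicontinuity hypotheses. Sion's theorem asks that $R\mapsto F(\mu,R)$ be upper semicontinuous on $\mathcal{R}$ for the weak topology; but since $\mu^k_t$ is only $\overline{\mathcal{F}}_t^n$-measurable (not necessarily continuous in $\omega$), the integrand $\sum_k\mu^k(\mathfrak{g}^k)$ is generally discontinuous in $\omega$ and the Portmanteau theorem does not directly apply. I plan to circumvent this by a discretization that exploits the assumed uniform continuity of $\mathfrak{g}^k_t$ in $\omega$. For each $m\in\mathbb{N}$, I would partition $\Xi$ into countably many Borel sets of diameter at most $1/m$ and replace $\mathfrak{g}^k_t$ by a step approximation $\mathfrak{g}^{k,m}_t$ that is constant on the product-partition cylinders, with $\|\mathfrak{g}^k_t-\mathfrak{g}^{k,m}_t\|_\infty\le\rho(1/m)$. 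In the discretized problem, it should be without loss of generality to restrict $\mu^k_t$ to be measurable with respect to the finite $\sigma$-algebra generated by the partition (via a conditional-averaging argument), reducing the minimax to an essentially finite-dimensional bilinear problem to which von Neumann's theorem applies directly. The uniform approximation bound $\sup_{\mu\in X,\,R\in Y}|F(\mu,R)-F^m(\mu,R)|\le K(T+1)\rho(1/m)\to 0$ will then let the limit $m\to\infty$ pass through both $\sup_\mu\inf_R$ and $\inf_R\sup_\mu$, yielding the desired minimax equality.
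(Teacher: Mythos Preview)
Your overall plan—discretize, apply minimax to the discretized problem, pass to the limit—matches the paper's, and your argument for the rightmost equality in \eqref{e73} is correct. The gap lies in the discretized minimax. You assert that after replacing $\mathfrak{g}^k_t$ by step approximations, one may restrict $\mu^k_t$ to the ``finite $\sigma$-algebra generated by the partition,'' reducing to a finite-dimensional bilinear problem where von~Neumann applies directly. But $\Xi$ is only Polish, not compact, so a Borel partition into sets of diameter $<1/m$ is in general \emph{countably infinite}; the coarsened strategies then live in an infinite-dimensional simplex and von~Neumann does not apply as stated. (The conditional-averaging step is also subtler than ``without loss of generality'': it requires a reference measure, so it works per $R$ for the $\inf_R\sup_\mu$ side, not as a uniform reduction on the $\sup_\mu\inf_R$ side.)

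The paper handles the discretized minimax differently: rather than discretize the payoffs, it pushes each $R\in\mathcal{R}$ forward through the discretization map $\lambda^m$, obtaining a family $\mathcal{R}_m$ supported on a countable set and hence \emph{dominated} by a single measure $R^*$. This is the key gain: under $R^*$ one equips all of $\overline{\mathcal{L}}^n$ with the Baxter--Chacon topology, in which it is compact and the bilinear form is continuous in $\mu$, so a standard minimax theorem applies with compactness on the $\mu$ side. Your uniform bound $|F-F^m|\le K\rho(1/m)$ for the limit-passing is in fact cleaner than the paper's Steps~1 and~3 (which rely on weak compactness of $\mathcal{R}$ and lower semicontinuity of $R\mapsto\sup_\tau E_R[\mathfrak{g}_\tau]$), and would work once the discretized minimax is secured. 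Your route is probably salvageable—the coarsened strategies form a Tychonoff-compact convex subset of a countable product of intervals, and $F^m(\cdot,R)$ is continuous there by dominated convergence, so a lopsided minimax (Ky~Fan) with compactness on the $\mu$ side should apply—but that argument is absent from the proposal.
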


\begin{proof}
We adapt the proof of \cite[Lemma 6.1]{ZZ8}. To overcome the difficulties stemming from both the i) discontinuity of stopping times, ii)  the fact that the collection $\cR$ may not have a dominating measure, we will first discretize $\overline\Omega^n$. We will then apply the minimax theorem to the discretized expression version of \eqref{e73}, and finally take the limit to conclude.

To this end, let $(A_i^m)_{i\in\mathbb{N}}\subset\mathcal{B}(\Xi)$ be a countable partition of $\Omega$, such that the diameter of each $A_i^m$ is less than $1/m$. Take $o_i^m\in A_i^m$ for each $i$ and $m$. Define the map $\lambda^m:\overline\Omega^n\mapsto\overline\Omega^n$ such that for any $\overline\omega=(\omega_1,\dotso,\omega_T,t^1,\dotso,t^n)\in\overline\Omega^n=\Xi^T\times\T^n$, if $\omega_t\in A_{i_t}^m$ for $t=1,\dotso,T$, then
$$\lambda^m(\overline\omega)=(o_{i_1},\dotso,o_{i_T},t^1,\dotso,t^n).$$
Let
$$\cR_m:=\{R\circ(\lambda^m)^{-1}:\ R\in\cR\}.$$
We shall proceed in four steps to show \eqref{e73}.

\textbf{Step 1}. We show that
\begin{equation}\label{e76}
\limsup_{m\rightarrow\infty}\sup_{\substack{\mu^k\in\overline\cL^n\\k=1,\dotso,K}}\inf_{R\in\cR_m}{E}_R\left[\sum_{k=1}^K\mu^k(\mathfrak{g}^k)\right]\leq\sup_{\substack{\mu^k\in\overline\cL^n\\k=1,\dotso,K}}\inf_{R\in\cR}{E}_R\left[\sum_{k=1}^K\mu^k(\mathfrak{g}^k)\right].
\end{equation}
Fix $\eps>0$. Let $(\mu_m^1,\dotso,\mu_m^K)\in\left(\overline\cL^n\right)^K$ be such that
$$\inf_{R\in\cR_m}{E}_R\left[\sum_{k=1}^K\mu_m^k(\mathfrak{g}^k)\right]\geq\sup_{\substack{\mu^k\in\overline\cL^n\\k=1,\dotso,K}}\inf_{R\in\cR_m}{E}_R\left[\sum_{k=1}^K\mu^k(\mathfrak{g}^k)\right]-\eps.$$
Define $(\tilde\mu_m^1,\dotso,\tilde\mu_m^K)$ by $(\tilde\mu_m^k)_t=(\mu_m^k)_t\circ\lambda^m$, for $t=0,\dotso,T$ and $k=1,\dotso,K$. We can show that $(\tilde\mu_n^1,\dotso,\tilde\mu_n^K)\in(\overline\cL^n)^K$. For any $\tilde R\in\cR$, let $\tilde R_m:=\tilde R\circ (\lambda^m)^{-1}\in\cR_m$. Then
$$E_{\tilde R_m}\left[\sum_{k=1}^K\mu_m^k(\mathfrak{g}^k)\right]=E_{\tilde R}\left[\sum_{k=1}^K\sum_{t=0}^T((\mu_m^k)_t\circ\lambda^m)(\mathfrak{g}^k_t\circ\lambda^m)\right]=E_{\tilde R}\left[\sum_{k=1}^K\sum_{t=0}^T(\tilde\mu_m^k)_t(\mathfrak{g}^k_t\circ\lambda^m)\right].$$
Therefore,
$$\left|{E}_{\tilde R_m}\left[\sum_{k=1}^K\mu_m^k(\mathfrak{g}^k)\right]-{E}_{\tilde R}\left[\sum_{k=1}^K\tilde\mu_m^k(\mathfrak{g}^k)\right]\right|\leq{E}_{\tilde R}\left[\sum_{k=1}^K\sum_{t=0}^T(\tilde\mu_m^k)_t\left|(\mathfrak{g}^k_t\circ\lambda^m)-\mathfrak{g}^k_t\right|\right]\leq K\rho(1/m).$$

Hence, we have that
\begin{equation*}
\begin{split}
\sup_{\substack{\mu^k\in\overline\cL^n\\k=1,\dotso,K}}\inf_{R\in\cR_m}{E}_R\left[\sum_{k=1}^K\mu^k(\mathfrak{g}^k)\right]-\eps&\leq\inf_{R\in\cR_m}{E}_R\left[\sum_{k=1}^K\mu_m^k(\mathfrak{g}^k)\right]\\
&\leq{E}_{\tilde R_m}\left[\sum_{k=1}^K\mu_m^k(\mathfrak{g}^k)\right]\leq{E}_{\tilde R}\left[\sum_{k=1}^K\tilde\mu_m^k(\mathfrak{g}^k)\right]+K\rho(1/m).
\end{split}
\end{equation*}
By the arbitrariness of $\tilde R$, we have that
\begin{equation*}
\begin{split}
\sup_{\substack{\mu^k\in\overline\cL^n\\k=1,\dotso,K}}\inf_{R\in\cR_m}{E}_R\left[\sum_{k=1}^K\mu^k(\mathfrak{g}^k)\right]-\eps&\leq\inf_{R\in\cR}{E}_R\left[\sum_{k=1}^K\tilde\mu_m^k(\mathfrak{g}^k)\right]+K\rho(1/m)\\
&\leq\sup_{\substack{\mu^k\in\overline\cL^n\\k=1,\dotso,K}}\inf_{R\in\cR}{E}_R\left[\sum_{k=1}^K\mu^k(\mathfrak{g}^k)\right]+K\rho(1/m).
\end{split}
\end{equation*}
Taking limsup on both sides above and then sending $\eps\searrow 0$, we have \eqref{e76} holds.

\textbf{Step 2}. We show that
$$\sup_{\substack{\mu^k\in\overline\cL^n\\k=1,\dotso,K}}\inf_{R\in\cR_m}{E}_R\left[\sum_{k=1}^K\mu^k(\mathfrak{g}^k)\right]=\inf_{R\in\cR_m}\sup_{\substack{\mu^k\in\overline\cL^n\\k=1,\dotso,K}}{E}_R\left[\sum_{k=1}^K\mu^k(\mathfrak{g}^k)\right].$$
As the domain of $\lambda^m$ is countable, there exists a probability measure $R^*$ on the domain of $\lambda^m$ that dominates $\cR_m$. Then we have that
\begin{eqnarray}
\notag&&\sup_{\substack{\mu^k\in\overline\cL^n\\k=1,\dotso,K}}\inf_{R\in\cR_m}{E}_R\left[\sum_{k=1}^K\mu^k(\mathfrak{g}^k)\right]=\sup_{\substack{\mu^k\in\overline\cL^n\\k=1,\dotso,K}}\inf_{R\in\cR_m}{E}_{R^*}\left[\frac{dR}{dR^*}\sum_{k=1}^K\mu^k(\mathfrak{g}^k)\right]\\
\notag&&=\inf_{R\in\cR_m}\sup_{\substack{\mu^k\in\overline\cL^n\\k=1,\dotso,K}}{E}_{R^*}\left[\frac{dR}{dR^*}\sum_{k=1}^K\mu^k(\mathfrak{g}^k)\right]=\inf_{R\in\cR_m}\sup_{\substack{\mu^k\in\overline\cL^n\\k=1,\dotso,K}}{E}_R\left[\sum_{k=1}^K\mu^k(\mathfrak{g}^k)\right],
\end{eqnarray}
where we apply the minimax theorem (see e.g., \cite[Corollary 2]{Frode}) for the second equality, and use the fact that $\overline\cL^n$ is compact and the map:
$$(\mu^1,\dotso,\mu^N)\mapsto{E}_{R^*}\left[\frac{dR}{dR^*}\sum_{k=1}^N\mu^k(\mathfrak{g}^k)\right]$$
is continuous under the Baxter-Chacon topology w.r.t. $R^*$ (see e.g., \cite{Edgar}).

\textbf{Step 3}. We show that
\begin{equation}\label{e77}
\inf_{R\in\cR}\sup_{\substack{\tau^k\in\overline\cT^n\\k=1,\dotso,K}}{E}_R\left[\sum_{k=1}^K \mathfrak{g}^k_{\tau^k}\right]\leq\liminf_{m\rightarrow\infty}\inf_{R\in\cR_m}\sup_{\substack{\tau^k\in\overline\cT^n\\k=1,\dotso,K}}{E}_R\left[\sum_{k=1}^K \mathfrak{g}^k_{\tau^k}\right].
\end{equation}
Without loss of generality, we assume the sequence $\left\{\inf_{R\in\cR_m}\sup_{\substack{\tau^k\in\overline\cT^n\\k=1,\dotso,K}}{E}_R\left[\sum_{k=1}^K \mathfrak{g}^k_{\tau^k}\right]\right\}_{m \in \mathbb{N}}$ converges. Fix $\eps>0$. Take $R_m\in\cR_m$ such that
$$\sup_{\substack{\tau^k\in\overline\cT^n\\k=1,\dotso,K}}{E}_{R_m}\left[\sum_{k=1}^K \mathfrak{g}^k_{\tau^k}\right]\leq\inf_{R\in\cR_m}\sup_{\substack{\tau^k\in\overline\cT^n\\k=1,\dotso,K}}{E}_R\left[\sum_{k=1}^K \mathfrak{g}^k_{\tau^k}\right]+\eps.$$
Let $\tilde R_m\in\cR$ be such that $R_m=\tilde R_m\circ(\lambda^m)^{-1}$. As $\cR$ is weakly compact, there exists $\tilde R\in\cR$ such that up to a subsequence $\tilde R_m\xrightarrow{w}\tilde R$. Then for any bounded uniformly continuous function $\f\in\mathcal{B}(\overline\Omega^n)$,
\begin{eqnarray}
\notag&&\hspace{-1cm}\left|{E}_{R_m}\f-{E}_{\tilde R}\f\right|\leq\left|{E}_{R_m}\f-{E}_{\tilde R_m}\f\right|+\left|{E}_{\tilde R_m}\f-{E}_{\tilde R}\f\right|=\left|{E}_{\tilde R_m}(\f\circ\lambda^m)-{E}_{\tilde R_m}\f\right|+\left|{E}_{\tilde R_m}\f-{E}_{\tilde R}\f\right|\\
\notag&&\leq{E}_{\tilde R_m}\left|(\f\circ\lambda^m)-\f\right|+\left|{E}_{\tilde R_m}\f-{E}_{\tilde R}\f\right|\leq\rho_\f(1/m)+\left|{E}_{\tilde R_m}\f-{E}_{\tilde R}\f\right|\rightarrow0,\quad m\rightarrow\infty,
\end{eqnarray}
where $\rho_\f$ is the modulus of continuity of $\f$. Hence, $R_m\xrightarrow{w}\tilde R$. Since the map
$$R\mapsto\sup_{\tau^k\in\overline\cT^n}{E}_R\left[\mathfrak{g}^k_{\tau^k}\right]$$
is lower semi-continuous under weak topology (see e.g., \cite[Theorem 1.1]{Elton89}), the map
$$R\mapsto\sum_{k=1}^K\sup_{\tau^k\in\overline\cT^n}{E}_R\left[\mathfrak{g}^k_{\tau^k}\right]=\sup_{\substack{\tau^k\in\overline\cT^n\\k=1,\dotso,K}}{E}_R\left[\sum_{k=1}^K \mathfrak{g}^k_{\tau^k}\right]$$
is also lower semi-continuous. Therefore,
\begin{eqnarray}
\notag&&\lim_{m\rightarrow\infty}\inf_{R\in\cR_m}\sup_{\substack{\tau^k\in\overline\cT^n\\k=1,\dotso,K}}{E}_R\left[\sum_{k=1}^K \mathfrak{g}^k_{\tau^k}\right]+\eps\geq\liminf_{m\rightarrow\infty}\sup_{\substack{\tau^k\in\overline\cT^n\\k=1,\dotso,K}}{E}_{R_m}\left[\sum_{k=1}^K \mathfrak{g}^k_{\tau^k}\right]\\
\notag&&\geq\sup_{\substack{\tau^k\in\overline\cT^n\\k=1,\dotso,K}}{E}_{\tilde R}\left[\sum_{k=1}^K \mathfrak{g}^k_{\tau^k}\right]\geq\inf_{R\in\cR}\sup_{\substack{\tau^k\in\overline\cT^n\\k=1,\dotso,K}}{E}_R\left[\sum_{k=1}^K \mathfrak{g}^k_{\tau^k}\right].
\end{eqnarray}
Letting $\eps\searrow0$ we obtain \eqref{e77}.

\textbf{Step 4}. From Steps 1-3 we have that
\begin{eqnarray}
\notag&&\sup_{\substack{\mu^k\in\overline\cL^n\\k=1,\dotso,K}}\inf_{R\in\cR}{E}_R\left[\sum_{k=1}^K\mu^k(\mathfrak{g}^k)\right]\leq\inf_{R\in\cR}\sup_{\substack{\mu^k\in\overline\cL^n\\k=1,\dotso,K}}{E}_R\left[\sum_{k=1}^K\mu^k(\mathfrak{g}^k)\right]\\
\notag&&=\inf_{R\in\cR}\sup_{\substack{\tau^k\in\overline\cT^n\\k=1,\dotso,K}}{E}_R\left[\sum_{k=1}^K \mathfrak{g}^k_{\tau^k}\right]\leq\liminf_{n\rightarrow\infty}\inf_{R\in\cR_m}\sup_{\substack{\tau^k\in\overline\cT^n\\k=1,\dotso,K}}{E}_R\left[\sum_{k=1}^K \mathfrak{g}^k_{\tau^k}\right]\\
\notag&&=\liminf_{m\rightarrow\infty}\inf_{R\in\cR_m}\sup_{\substack{\mu^k\in\overline\cL^n\\k=1,\dotso,K}}{E}_R\left[\sum_{k=1}^K\mu^k(\mathfrak{g}^k)\right]=\liminf_{m\rightarrow\infty}\sup_{\substack{\mu^k\in\overline\cL^n\\k=1,\dotso,K}}\inf_{R\in\cR_m}{E}_R\left[\sum_{k=1}^K\mu^k(\mathfrak{g}^k)\right]\\
\notag&&\leq\limsup_{m\rightarrow\infty}\sup_{\substack{\mu^k\in\overline\cL^n\\k=1,\dotso,K}}\inf_{R\in\cR_m}{E}_R\left[\sum_{k=1}^K\mu^k(\mathfrak{g}^k)\right]\leq\sup_{\substack{\mu^k\in\overline\cL^n\\k=1,\dotso,K}}\inf_{R\in\cR}{E}_R\left[\sum_{k=1}^K\mu^k(\mathfrak{g}^k)\right],
\end{eqnarray}
where the first and the second equalities (passage from the liquidation strategies to stopping times and then back) follow from \cite[Proposition 1.5]{Edgar}. 
%
\end{proof}

\begin{lemma}\label{l12}
Let \asref{a1}(i)(ii) hold. Then for $n=N,N+1$, the set
$$\overline\cQ^n(\alpha,\gamma):=\left\{Q\lll\overline\cP^n:\ \overline S^n\text{ is a $Q$-martingale},\ E_Q\left[\overline f^n\right]\leq\alpha,\ E_Q\left[\overline h^n\right]\geq\gamma\right\},$$
is weakly compact.
\end{lemma}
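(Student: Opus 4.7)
The plan is to establish that $\overline\cQ^n(\alpha,\gamma)$ is tight on $\overline\Omega^n$ and weakly closed in $\fP(\overline\Omega^n)$.

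For tightness, I first observe that for any $Q \in \overline\cQ^n(\alpha,\gamma)$ the $\Omega$-marginal $Q^1$ lies in $\cQ(\alpha)$: $Q^1 \lll \cP$ holds because $Q \ll P \otimes R$ for some $P \in \cP$ and probability measure $R$ on $\T^n$, so $Q^1 \ll P$; the $\F$-martingale property of $S$ under $Q^1$ follows from the $\overline\F^n$-martingale property of $\overline S^n$ by testing against events of the form $B \times \T^n$ with $B \in \cF_t$; and $E_{Q^1}[f] = E_Q[\overline f^n] \leq \alpha$ since $\overline f^n$ depends only on $\omega$. By \asref{a1}(i), $\cQ(\alpha)$ is tight on $\Omega$, which combined with the finiteness (hence compactness) of $\T^n$ yields tightness of $\overline\cQ^n(\alpha,\gamma)$ on the product.

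For weak closedness, take $Q_k \to Q$ weakly with $Q_k \in \overline\cQ^n(\alpha,\gamma)$. Since $Q_k^1 \to Q^1$ weakly on $\Omega$ and $\cQ(\alpha)$ is weakly closed, $Q^1 \in \cQ(\alpha)$; this already yields $E_Q[\overline f^n] = E_{Q^1}[f] \leq \alpha$ and the $\F$-martingale property of $S$ under $Q^1$. Absolute continuity $Q \lll \overline\cP^n$ is then obtained by picking $P^* \in \cP$ with $Q^1 \ll P^*$ and any full-support probability $R^*$ on $\T^n$: then $P^* \otimes R^* \in \overline\cP^n$ dominates $Q$ through the fiber bound $Q(A_{\bf t} \times \{{\bf t}\}) \leq Q^1(A_{\bf t})$. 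The constraint $E_Q[\overline h^n] \geq \gamma$ follows from Portmanteau: each coordinate $\overline h^{k,n}(\omega,{\bf t}) = h^k_{t^k}(\omega)$ is bounded above by \asref{a1}(ii), and since $\T^n$ has the discrete topology, upper semicontinuity of $h^k_t$ in $\omega$ for each fixed $t$ lifts to upper semicontinuity of $\overline h^{k,n}$ on the product, so $\limsup_k E_{Q_k}[\overline h^n] \leq E_Q[\overline h^n]$.

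The main obstacle is preserving the $\overline\F^n$-martingale property of $\overline S^n$ in the weak limit. The plan is to disintegrate: define the sub-probability measures $Q_{k,{\bf t}}(A) := Q_k(A \times \{{\bf t}\})$ for $A \in \cF_T$, and $Q_{\bf t}$ analogously. Because $1_{\{{\bf t}\}}$ is continuous on the discrete space $\T^n$, the function $(\omega,{\bf s}) \mapsto \varphi(\omega) 1_{\{{\bf t}\}}({\bf s})$ is bounded continuous on $\overline\Omega^n$ for every bounded continuous $\varphi: \Omega \to \R$, so weak convergence $Q_k \to Q$ implies $\int \varphi\, dQ_{k,{\bf t}} \to \int \varphi\, dQ_{\bf t}$ for every such $\varphi$ and every ${\bf t} \in \T^n$. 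A direct computation shows that $\overline S^n$ is a $(Q,\overline\F^n)$-martingale if and only if, for every $t \in \{0,\dotso,T-1\}$ and every atom $C$ of the finite sub-sigma-algebra of $\T^n$ generated by $\{\{\theta^k \leq s\}: s \leq t,\ k \leq n\}$, the aggregated sub-probability $Q^C := \sum_{{\bf t}\in C} Q_{\bf t}$ satisfies $\int_B(S_{t+1}-S_t)\, dQ^C = 0$ for every $B \in \cF_t$. Passing this partial martingale identity from $Q_k^C$ to $Q^C$ proceeds by exactly the same weak-closedness mechanism that underlies \asref{a1}(i), noting that the normalized probabilities $Q_k^C / Q_k^C(\Omega)$ are dominated by measures in $\cQ(\alpha)$ and hence inherit the tightness/closedness properties of that set. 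Reassembling across all atoms $C$ and all $t$ yields the full $\overline\F^n$-martingale property of $\overline S^n$ under $Q$, completing the proof.
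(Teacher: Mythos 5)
Your tightness argument (reducing to the $\Omega$-marginal and using weak compactness of $\cQ(\alpha)$ together with finiteness of $\T^n$) and your treatment of the constraints $E_Q[\overline f^n]\le\alpha$, $Q\lll\overline\cP^n$, and $E_Q[\overline h^n]\ge\gamma$ all match the paper's proof. Your reduction of the $\overline\F^n$-martingale property to the family of identities $\int_B(S_{t+1}-S_t)\,dQ^C=0$, $B\in\cF_t$, $C$ an atom of the $\T^n$-part of $\overline\cF_t^n$, is also correct and is a genuine refinement over the paper, which simply asserts that ``$S$ is a $Q_\infty'$-martingale measure implies that $\overline S^n$ is a $\overline Q_\infty$-martingale.'' Taken as a blanket implication that assertion is false: a disintegration $Q_\infty'\otimes Q_\infty''$ with an $\cF_T$-measurable kernel $Q_\infty''$ that encodes information about future increments of $S$ in the exercise times will break the $\overline\F^n$-martingale property even if $S$ is an $\F$-martingale under $Q_\infty'$. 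So you have correctly identified the real obstacle where the paper glosses over it.

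The gap is in your final step. You assert that passing the partial identity from $Q_k^C$ to $Q^C$ ``proceeds by exactly the same weak-closedness mechanism that underlies \asref{a1}(i), noting that the normalized probabilities $Q_k^C/Q_k^C(\Omega)$ are dominated by measures in $\cQ(\alpha)$ and hence inherit the tightness/closedness properties of that set.'' Domination by elements of a weakly compact family does give tightness, but it does \emph{not} transfer weak closedness: $Q_k^C/Q_k^C(\Omega)$ is absolutely continuous with respect to $Q_k^1\in\cQ(\alpha)$, but it satisfies only a \emph{single} one-step identity at the time $t$ determined by the atom $C$; it is not a full $\F$-martingale measure and hence not an element of $\cQ(\alpha)$. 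The weak closedness of $\cQ(\alpha)$ in \asref{a1}(i) is a property of the set $\cQ(\alpha)$ itself (limits of sequences in it remain in it), and there is no general principle that lets you carry it over to measures that are merely dominated by elements of the set. Concretely, $1_B(S_{t+1}-S_t)$ is neither continuous nor a priori uniformly integrable along $Q_k^C$, so some genuine approximation argument (e.g., exploiting boundedness/continuity of $S$ hidden inside \asref{a1}(i), choosing $B$ with $Q^1$-null boundary, and controlling tails) is required before one can pass $\int_B(S_{t+1}-S_t)\,dQ_k^C\to\int_B(S_{t+1}-S_t)\,dQ^C$. As it stands, your proof breaks at this point, just as the paper's does, though you at least flag the difficulty.
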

\begin{proof}
Take $\eps>0$. Since $\cQ(\alpha)$ is weakly compact, there exists a compact set $K\in\mathcal{B}(\Omega)$ such that $Q(K)\geq 1-\eps$ for any $Q\in\cQ(\alpha)$. For any $\overline Q\in\overline\cQ^n(\alpha,\gamma)$, we can write it as
\begin{equation}\label{e7}
\overline Q=Q'(\cdot)\otimes Q''(\cdot,t^1,\dotso,t^n),
\end{equation}
where $Q'$ is the marginal distribution of $\overline Q$ on $(\Omega,\mathcal{B}(\Omega))$, and $Q''$ is a transition kernel. Moreover, it is easy to see that $Q'\in\cQ(\alpha)$. Then we have that
$$\overline Q(K\times\mathbb{T}^n)=Q'(K)>1-\eps.$$
This implies that the set $\cQ^n(\alpha,\gamma)$ is tight and thus pre-compact.

Now take $(\overline Q_i)_{i=1}^\infty\subset\overline\cQ^n(\alpha,\gamma)$ such that $\overline Q_i\xrightarrow{w}\overline Q_\infty$, and we will show that $\overline Q_\infty\in\cQ^n(\alpha,\gamma)$. Denote as in \eqref{e7},
\begin{equation}\notag
\overline Q_i=Q_i'(\cdot)\otimes Q_i''(\cdot,t^1,\dotso,t^n),\quad i=1,\dotso,\infty.
\end{equation}
Obviously $Q_i'\xrightarrow{w}Q_\infty'$. As $\cQ(\alpha)$ is weakly compact, $Q_\infty'\in\cQ(\alpha)$. Then there exists $P\in\cP$ dominating $Q_\infty'$. This implies that $\overline Q_\infty\lll\overline\cP^n$, since $\overline Q_\infty$ is dominated by $P\otimes R$ for any probability measure $R$ on $\mathbb{T}^n$ with full support. Furthermore, $S$ is a $Q_\infty'$-martingale measure implies that $\overline S^n$ is a $\overline Q_\infty$-martingale,
and
$$E_{\overline Q_\infty}\left[\overline f^n\right]=E_{Q_\infty'}[f^n]\leq\alpha.$$
Finally, as $E_{\overline Q_i}\left[\overline h^n\right]\geq\gamma$ and $\overline Q_i\xrightarrow{w}\overline Q_\infty$, we have $E_{\overline Q_\infty}\left[\overline h^n\right]\geq\gamma$ by \asref{a1}(ii).
\end{proof}

\begin{proof}[\textbf{Proof of \thref{t3}}]
We adapt the proof of \cite[Theorem 5.1]{ZZ8}. As SNA holds in $\overline{\mathbb{M}}^N$, SNA$(\overline\cP^N)$ holds with $((\overline f^N,-\overline h^N),(\alpha,-\gamma))$. Then
\begin{eqnarray}
\notag \underline\pi(\phi)&=&\sup_{b\in\R_+^M}\sup_{\overline\mu\in\left(\overline\cL^N\right)^M,\overline\eta\in\overline\cL^N}\sup\Big\{x\in\R:\ \exists(\overline H,a,c)\in\overline\cH^N\times\R_+^L\times\R_+^N,\\
\notag&&\hspace{4cm}\text{ s.t. }\PhiN+\overline\eta(\overline\phi^N)\geq x,\ \overline\cP^N\text{-q.s.}\Big\}\\
\notag&=&\sup_{b\in\R_+^M}\sup_{\overline\mu\in\left(\overline\cL^N\right)^M,\overline\eta\in\overline\cL^N}\inf_{Q\in\overline\cQ^N(\alpha,\gamma)} E_Q\left[b\left(\overline\mu\left(\overline g^N\right)-\beta\right)+\overline\eta\left(\overline\phi^N\right)\right]\\
\label{e92} &=&\sup_{b\in\R_+^M}\inf_{Q\in\overline\cQ^N(\alpha,\gamma)}\sup_{\substack{\tau, \tau^j\in\overline\cT^N\\j=1,\dotso,M}} E_Q\left[\sum_{j=1}^M b^j\left(\overline {g_{\tau^j}^j}^N-\beta^j\right)+\overline\phi_\tau^N\right],
\end{eqnarray}
where we apply \leref{l2}(ii) for the second equality, and Lemmas \ref{l11} and \ref{l12} for the third equality. Now for $b\geq 0$ the map
$$b\mapsto\sup_{\substack{\tau, \tau^j\in\overline\cT^N\\j=1,\dotso,M}} E_Q\left[\sum_{j=1}^M b^j\left(\overline {g_{\tau^j}^j}^N-\beta^j\right)+\overline\phi_\tau^N\right]$$
is linear, and 
the map
$$Q\mapsto\sup_{\substack{\tau, \tau^j\in\overline\cT^N\\j=1,\dotso,M}} E_Q\left[\sum_{j=1}^M b^j\left(\overline {g_{\tau^j}^j}^N-\beta^j\right)+\overline\phi_\tau^N\right]$$
is convex and lower semi-continuous under the weak topology, see \cite[Theorem 1.1]{Elton89}. Thanks to the weak compactness of $\overline\cQ^N(\alpha,\gamma)$ by \leref{l12}, we can apply the minimax theorem (see e.g., \cite[Corollary 2]{Frode}) to \eqref{e92} and obtain
$$\underline\pi(\phi)=\inf_{Q\in\overline\cQ^N(\alpha,\gamma)}\sup_{b\in\R_+^M}\sup_{\substack{\tau, \tau^j\in\overline\cT^N\\j=1,\dotso,M}} E_Q\left[\sum_{j=1}^M b^j\left(\overline {g_{\tau^j}^j}^N-\beta^j\right)+\overline\phi_\tau^N\right]=\inf_{Q\in\overline\cQ^N(\alpha,\beta,\gamma)}\sup_{\tau\in\overline\cT^N}E_Q\left[\overline\phi_\tau^N\right].$$

Let $(Q_m)_{m\in\mathbb{N}}\subset\overline\cQ^N(\alpha,\beta,\gamma)\subset\overline\cQ^N(\alpha,\gamma)$. As $\overline\cQ^N(\alpha,\gamma)$ is weakly compact, there exist $Q\in\overline\cQ^N(\alpha,\gamma)$ and $(Q_{n_i})_{i\in\mathbb{N}}\subset(Q_n)_{n\in\mathbb{N}}$, such that $Q_{n_i}\xrightarrow{w} Q$. Again, by \cite[Theorem 1.1]{Elton89} the map $R\mapsto\sup_{\tau\in\overline\cT^N}\E_R\left[\overline {g_\tau^j}^N\right]$ is lower semi-continuous for $j=1,\dotso,M$. Therefore,
$$\sup_{\tau\in\overline\cT^N} E_Q\left[\overline {g_\tau^j}^N\right]\leq\liminf_{i\rightarrow\infty}\sup_{\tau\in\overline\cT^N} E_{Q_{n_i}}\left[\overline {g_\tau^j}^N\right]\leq\beta^j,\quad j=1,\dotso,M.$$
Hence, $Q\in\overline\cQ^N(\alpha,\beta,\gamma)$, which implies that $\cQ^N(\alpha,\beta,\gamma)$ is weakly compact. Therefore, the infimum for the duality of $\underline\pi(\phi)$ is attained since the map $R\mapsto\sup_{\tau\in\overline\cT^N}\E_R\left[\overline\phi_\tau^N\right]$ is lower semi-continuous.

As SNA$(\overline\cP^N)$ holds with $((\overline f^N,-\overline h^N),(\alpha,-\gamma))$, SNA$(\overline\cP^N)$ also holds with $((\overline f^{N+1},-\overline h^{N+1}),(\alpha,-\gamma))$ by \leref{l1}. Then corresponding results for the super-hedging price follows by similar but simpler arguments.
\end{proof}

\begin{proof}[\textbf{Proof of \thref{t4}}]
It is easy to show the sufficiency and let us focus on the necessity. We will prove by an induction on the number of longed American options, namely $M$. For $M=0$ the result follows from \leref{l2}(i). Now suppose the result holds for $M=m-1\in\mathbb{N}$ and show that it holds for $M=m$. For $k=m-1,m$, denote NA$^k$, SNA$^k$, $\underline\pi^k(\cdot)$ and $\overline\cQ^{N,k}$ as the NA defined in \deref{d2}, SNA defined in \deref{d2}, sub-hedging price defined in \eqref{e5}, and the set of martingale measures defined in \eqref{e29} in terms of $S, f, h$ and $g^1,\dotso,g^k$ in $\overline{\mathbb{M}}^N$, respectively. Denote $\pmb\beta^k:=(\beta^1,\dotso,\beta^k)$.

Let SNA$^m$ hold in $\overline{\mathbb{M}}^N$. Then there exists $\delta>0$, such that  NA$^m$ holds in $\overline{\mathbb{M}}^N$ w.r.t. $\alpha$, $(\beta^1,\dotso,\beta^{m-1},\beta^m-\delta)$, $\gamma$. It follows that
\begin{equation}\label{e91}
\underline\pi^{m-1}(g^m)\leq\beta^m-\delta,
\end{equation}
for otherwise, one would create an arbitrage by paying $\beta^m-\delta$ to buy one unit of $g^m$ and getting $(\underline\pi^{m-1}(g^m)+\beta^m-\delta)/2$ via some trading strategy.  As SNA$^m$ holds, SNA$^{m-1}$ also holds. Hence, by \thref{t3} we have that
\begin{equation}\label{e90}
\underline\pi^{m-1}(g^m)=\inf_{Q\in\overline\cQ^{N,m-1}(\alpha,\,\pmb\beta^{m-1},\gamma)}\sup_{\tau\in\overline{\mathcal{T}}^N} E_Q\left[\overline{g_\tau^m}^N\right].
\end{equation}
Moreover, by the induction hypotheses there exists $\delta'>0$ such that for any $P\in\overline\cP^N$, there exists $Q^1\in\overline\cQ^{N,m-1}(\alpha-\delta',\pmb\beta^{m-1}-\delta',\gamma+\delta')$ dominating $P$. 

By \asref{a1}(iii), there exists $C>0$ such that $|g_t^m|<C$ for $t=0,\dotso,T$. Choose $\lambda\in(0,1)$ such that
$$\beta^*:=\lambda C+(1-\lambda)(\beta^m-\delta/2)<\beta^m.$$
Let
$$\alpha':=\lambda(\alpha-\delta')+(1-\lambda)\alpha=\alpha-\lambda\delta',\quad\quad\gamma':=\gamma+\lambda\delta'$$
and
$$\beta':=(\beta^1-\lambda\delta',\dotso,\beta^{m-1}-\lambda\delta',\beta^*).$$

Let $P\in\cP$. We will show that there exists some $Q\in\overline\cQ^{N,m}(\alpha',\beta',\gamma')\subset\overline\cQ^{N,m}(\alpha-\eps,\beta-\eps,\gamma+\eps)$ dominating $P$, where $\eps:=(\beta^m-\beta^*)\wedge(\lambda\delta')$. By \eqref{e91} and \eqref{e90}, there exists $Q^2\in\overline\cQ^{N,m-1}(\alpha,\pmb\beta^{m-1},\gamma)$, such that
$$\sup_{\tau\in\overline\cT^N} E_{Q^2}\left[\overline{g_\tau^m}^N\right]<\beta^m-\delta/2.$$
Let
$Q_\lambda:=\lambda Q^1+(1-\lambda)Q^2\gg P.$
Obviously, $Q_\lambda\lll\cP$, $Q_\lambda$ is a martingale measure, $E_{Q_\lambda}\left[\overline f^N\right]\leq \alpha'$, $ E_{Q_\lambda}\left[\overline h^N\right]\geq\gamma'$, and
$$\sup_{\tau\in\overline\cT^N} E_{Q_\lambda}\left[\overline{g_\tau^j}^N\right]\leq\beta^j-\lambda\delta',\quad j=1,\dotso,m-1.$$
Furthermore,
$$\sup_{\tau\in\overline\cT^N} E_{Q_\lambda}\left[\overline{g_\tau^m}^N\right]=\sup_{\tau\in\overline\cT^N}\left(\lambda E_{Q^1}\left[\overline{g_\tau^m}^N\right]+(1-\lambda) E_{Q^2}\left[\overline{g_\tau^m}^N\right]\right)\leq\lambda C+(1-\lambda)\sup_{\tau\in\overline\cT^N} E_{Q^2}\left[\overline{g_\tau^m}^N\right]\leq\beta^*.$$
This implies $Q_\lambda\in\overline\cQ^{N,m}(\alpha',\beta',\gamma')$.
\end{proof}

{\small
\bibliographystyle{siam}
\bibliography{ref}}
\end{document}